\DeclareMathOperator*{\argmax}{arg\,max}
\DeclareMathOperator*{\argmin}{arg\,min}
\newcommand{\calX}{{{\mathcal{X}}}}
\newcommand{\calL}{{{\mathcal{L}}}}
\newcommand{\calWX}{{{\mathcal{WX}}}}
\newcommand{\calW}{{{\mathcal{W}}}}
\newcommand{\calR}{{{\mathcal{R}}}}
\newcommand{\cost}{{\mathit{cost}}}
\begin{document}
\title{An Adaptive and Verifiably Proportional Method for Participatory Budgeting}
\titlerunning{An Adaptive Method of Equal Shares}
%
\author{Sonja Kraiczy\and
Edith Elkind}
%
%
\institute{Department of Computer Science, University of Oxford,\\ 7 Parks Rd, Oxford OX1 3QG, United Kingdom 
\email{\{Sonja.Kraiczy,Edith.Ekind\}@cs.ox.ac.uk}\\
}
\maketitle              
\begin{abstract}
Participatory Budgeting (PB) is a form of participatory democracy in which citizens select a set of projects to be implemented, subject to a budget constraint. The Method of Equal Shares (MES), introduced in \cite{MES2}, is a simple iterative method for this task, 
which runs in polynomial time and satisfies a demanding proportionality axiom (Extended Justified Representation) in the setting of approval utilities. However, a downside of MES is that it is non-exhaustive: given an MES outcome, it may be possible to expand it by adding new projects without violating the budget constraint. To complete the outcome, the approach currently used in 
practice\footnote{e.g., in Wieliczka in Apr 2023, {\tt https://equalshares.net/resources/zielony-milion/}}
is as follows: given an instance with budget $b$, one searches for a budget $b'\ge b$ such that when MES is executed with budget $b'$, it produces a maximal feasible solution for $b$. The search is greedy, i.e., one has to execute MES from scratch for each value of $b'$. To avoid redundant computation, we introduce a variant of MES, which we call Adaptive Method of Equal Shares (AMES). Our method is budget-adaptive, in the sense that, given an outcome $W$ for a budget $b$ and a new budget $b'>b$, it can 
compute the outcome $W'$ for budget $b'$ by leveraging similarities between $W$ and $W'$. This eliminates the need to recompute solutions from scratch
when increasing virtual budgets. Furthermore, AMES satisfies EJR
in a certifiable way: given the output of our method, one can check in time 
$O(n\log n+mn)$ that it provides EJR (here, $n$ is the number of voters and $m$ is the number of projects).
We evaluate the potential of AMES on real-world PB data, showing that small increases
in budget typically require only minor modifications of the outcome.

\keywords{Computational Social Choice  \and Participatory Budgeting \and .}
\end{abstract}

\section{Introduction}
Participatory Budgeting (PB) gives residents of a city the power to decide how (part of a) public budget will be spent. Starting with Porto Alegre in Brazil, PB is now used in many cities around the world including locations in France, Iceland, Italy, Poland, Spain and many more \cite{wampler2021participatory}. Typically, the city council collects proposals for projects to be funded \cite{cabannes2004participatory}, such as building new cycle lanes or refurbishing a playground, and then
the residents vote to indicate which of the proposed projects they would like to see implemented. These votes are then aggregated into a final outcome, which must respect the budget constraint~\cite{aziz2021participatory}.
In practice, it is common to use the \textit{greedy rule}, which asks the residents which projects they approve, sorts the projects by the number of approvals (from highest to lowest), and adds projects one by one in this order until the budget is exhausted.
However, the greedy rule is very far from being proportional: if 51\% of residents approve
one set of projects, while 49\%  approve a disjoint set of projects, it will consider all
projects approved by the 51\% majority before those approved by the 49\% minority.

In contrast, the recently introduced Method of Equal Shares (MES)~\cite{MES} gives all voters equal voting power. The outcome of MES is proportional, as formalized by the demanding EJR (Extended Justified Representation) axiom \cite{aziz2017justified}. (This axiom was originally formulated for the simpler setting where all projects have equal cost, known as multiwinner voting, and then extended to the PB setting \cite{MES2}.) Indeed, MES is the first voting rule to satisfy the EJR axiom in the context of participatory budgeting. MES has already been used in practice in  Wieliczka (Poland) in April 2023, and in Aarau (Switzerland) in June 2023~\cite{mesweb}. Informally, it proceeds by (virtually) giving each voter an equal part of the budget and selecting projects iteratively one by one; the cost of each selected project is shared (almost) evenly among its supporters. In each iteration, MES selects a project so as to minimize the per-head price paid by its supporters:
for instance, it will prioritize a project that costs \$10,000 and is supported by 2000 voters over a project that costs \$8,000 and and is supported by 1000 voters.
It stops when none of the remaining projects can be afforded by its supporters.

\paragraph{Efficient Completion}{ 
A downside of MES is that it may fail to exhaust the budget, i.e., there may be projects that receive approvals from some voters and would fit within the remaining budget, but remain unselected. Therefore, a natural question is how to best complete the outcome of~MES. 
In practice, the current approach to the completion problem (for a detailed discussion, see \cite{mesweb}) is to run the method multiple times, increasing the (virtual) budget in each iteration. In more detail, suppose the real budget available is three million dollars, but the output of MES with such a budget is not exhaustive. Then we increase every voter's budget by one dollar and run MES with the increased budget. We proceed in this way until either the outcome is exhaustive or the next budget increase would
result in an outcome whose total cost is more than three million dollars. 

However, running the Method of Equal Shares from scratch repeatedly may be very inefficient: Intuitively, the outcome is unlikely to change much if the budget is increased by a small amount. Hence, it would be desirable to leverage the overlap between the outcomes of successive iterations, and construct the outcome iteratively, only modifying it to the extent necessary, instead of executing MES from scratch each time we increase the budget.
}

\paragraph{Efficient Verifiablility} 
{Another key challenge in multiwinner voting and participatory budgeting is efficiently verifying the proportionality of an outcome. In particular, verifiability is important in the context of blockchains, and specifically their applications in Decentralized Autonomous Organizations (DAOs) \cite{beck2018governance,sims2019blockchain}. DAOs are based on creating a large network of nodes belonging to human stakeholders, with each node running a protocol locally.
The vision is for such a network to offer services and make democratic decisions about various matters in a decentralized fashion, without the need to trust or rely on a central authority. The protocol should work even if the computational resources of the humans behind nodes are limited, such as off-the-shelf computers or smartphones, so that as many as possible can benefit from the network.

Some blockchain networks use multiwinner voting to appoint validators~\cite{burdges2020overview,brunjes2020reward,grigg2017eos}. Validators are special roles that nodes can take on: they have to validate transactions and receive a monetary reward for doing so (or get punished for adversarial behavior). 
It is desirable to select validators (from the set of candidate nodes) in a proportional manner, both to increase voting nodes' satisfaction \cite{dposgov}, and to avoid the centralization of power \cite{eoscentr}. Unfortunately, many proportional multiwinner rules are computationally hard \cite{PAV-hard} or else have prohibitively slow polynomial running time, so recent work by \cite{cevallos2021verifiably} proposes efficient verifiability as a solution. The key observation is that the (expensive) computational task
of choosing the validators can be performed by a non-trusted party (``off-chain'')
as long as the proportionality of the proposed solution can be efficiently checked by any node. Unfortunately, 
checking whether an arbitrary outcome satisfies the EJR axiom (or even the weaker PJR axiom) is NP-hard even in the setting of multiwinner voting~\cite{aziz2017justified,aziz2018complexity}. Nevertheless,  
\cite{cevallos2021verifiably} present a new multiwinner voting rule phragmms, such that 
the outputs of this rule satisfy PJR and this can be efficiently verified in time linear in the input size
(the rule outputs auxiliary information, which can be used as a certificate of PJR).
Now, verifiability remains a relevant concern in the broader context
of participatory budgeting: it is natural for DAOs to allocate funds
for new projects based on stakeholder votes. However, 
prior to our work it was not known whether there exist voting
rules for participatory budgeting that satisfy demanding proportionality axioms
(such as EJR) in a verifiable way.
}

\smallskip

\paragraph{Our Contribution}{
We present a method for participatory budgeting with approval utilities that is closely related to the Method of Equal Shares. Specifically, we define a weak order $\rhd$ on feasible solutions (i.e., pairs of the form $(W, X)$, where $W$ is the selected set of projects and $X$ describes how the costs of these projects are shared among the voters) that is inspired by the definition of the MES rule. Our method, which we call the Adaptive Method of Equal Shares (AMES), operates by executing greedy local search with respect to $\rhd$, i.e., it outputs solutions that
cannot be improved with respect to $\rhd$ by simple transformations. 
Interestingly, in all solutions output by AMES the cost of each project
is shared {\em exactly} equally among its supporters. Moreover, if a MES solution
for a given instance has this `exact equal sharing' property, it is also in the 
output of AMES on this instance, i.e., the two methods are indeed similar.

We show that our method has several desirable properties:
\begin{enumerate}
\item AMES satisfies the EJR axiom for approval utilities.
\item AMES runs in time $O(mn\log n+mn)$, 
where $m$ is the number of projects and $n$ is the number of voters.
Moreover, it admits a very efficient algorithm for verifying that its output satisfies EJR: by using $O(mn)$ auxiliary data, it can verify EJR in time $O(mn)$, i.e., linear in the input size.
\item AMES is budget-adaptive: To determine its output for budgets $b_1<b_2\ldots<b_t$, instead of having to start from scratch for every single budget, we can use the output for budget $b_i$ to obtain the output of the rule for budget $b_{i+1}$.
\end{enumerate}
This combination of properties in unique: AMES is the first polynomial-time voting method to satisfy EJR in the approval PB setting that is budget-adaptive and verifiably proportional. Indeed, in the PB setting Phragm\'en's rule \cite{phragmen:p1,janson2016phragmen} is naturally budget-adaptive, but fails EJR even for multiwinner voting (i.e., for PB with unit costs), while Local Search PAV \cite{aziz2018complexity} is fully adaptive, but is known to fail EJR in the PB setting~\cite{MES2}. For multiwinner setting, Brill and Peters~\cite{brill2023robust}, have recently proposed a new axiom, which they call EJR+. This axiom
strengthens EJR, can be verified in time $O(kmn)$ in general (where $k$ is the size of the committee), and is satisfied by a simple greedy rule.
However, the analysis in~\cite{brill2023robust} does not show linear-time verifiability; moreover, the verifiability results of \cite{brill2023robust}, just as those of \cite{cevallos2021verifiably}, only apply to multiwinner approval voting rather than the more general setting of participatory budgeting.

To gain insights on the practicality of our method, we analyze real-world participatory budgeting data from several cities in Poland, and consider the average number of changes in the outcome as a result of increases in the budget. Our findings confirm that AMES, and the budget-adaptive approach in general, has great potential to make practical implementation much faster.

We also show that, by handling ties carefully, we can ensure that the output 
of the algorithm run from scratch with budget $b$ is identical to the output
of the adaptive algorithm that starts from a feasible solution for a budget $b'<b$ and then (gradually) raises the budget to $b$. This property is attractive, because
it ensures that AMES is as easy to explain to the voters as the standard MES rule: instead of explaining how the algorithm modifies the solution with each budget increase, one can simply demonstrate its execution for the final budget $b$.
\smallskip

\noindent\textbf{Related Work.}
The special case of participatory budgeting with approval ballots 
where all projects have the same cost is known {\em as approval-based multiwinner voting}, and there is a very substantial literature on approval-based multiwinner voting rules, their axiomatic properties, and algorithmic complexity \cite{LS23}.
The EJR axiom was first introduced in this context in \cite{aziz2017justified}, and the first polynomial-time rule shown to satisfy it was Local Search PAV \cite{aziz2018complexity}, followed by the arguably more natural Method of Equal Shares \cite{MES}. 

In recent years, there has been an explosion of interest in participatory budgeting; by now there is a wide variety of different models and approaches \cite{aziz2021participatory}. 
Proportionality axioms for participatory budgeting were first considered by \cite{aziz2018proportionally}. In \cite{MES2}.
the authors extend the Method of Equal Shares from the multiwinner voting setting \cite{MES} to a general PB model with general additive utilities. They show that MES satisfies EJR up to one project, giving the first such polynomial-time voting method for participatory budgeting. For the special case of participatory budgeting with approval utilities (which is the focus of our work), \cite{MES2} show that MES satisfies EJR.

Our adaptive method uses a stability notion very similar to that of Peters et al.~\cite{peters2021market}, in that they also consider justifying outcomes of an election by means of a price system that satisfies the stability condition.
Peters et al. are concerned with the existence and properties of this stability notion for a given committee size, which may not always exist, while our formulation is not tied to a committee size. Instead, we show that we can use the stability notion to efficiently move through the space of intermediate outcomes of MES when increasing the committee size/ budget.



\section{Preliminaries}
For each $t \in \mathbb{N}$, we write $[t] = \{1,2,\ldots, t\}$. 
\paragraph{Participatory Budgeting}{ We first introduce the model of participatory budgeting with approval ballots. 
An {\em election} is a tuple $E = (N,P,(A(i))_{i\in N}, b, \cost)$, where:
\begin{enumerate}
\item $N = [n]$ and $P = \{p_1, \ldots, p_m\}$ are the sets of {\em voters} and 
{\em projects}, respectively;
\item for each $i\in N$ the set $A(i)\subseteq P$ is the {\em ballot} of voter $i$, 
i.e., the set of projects approved by~$i$;
\item $b \in \mathbb{Q}_{> 0}$ is the available budget;
\item $\cost:P \rightarrow \mathbb{Q}_{> 0}$ is a function that for each $p \in P$ indicates the cost of selecting $p$. For each $W \subset P$, we denote the total cost of $W$ by $\cost(W) = \sum_{p \in W} \cost(p)$.
\end{enumerate}
We assume every project is approved by at least one voter.

An {\em outcome} is a set of projects $W\subseteq P$ that is {\em feasible}, i.e., satisfies  $\cost(W) \leq b$. 
We denote the set of all outcomes for an election 
$E = (N,P,(A(i))_{i\in N}, b, \cost)$ by $\calW(E)$; sometimes,  
to emphasize the dependence on the budget $b$, we write $\calW(b)$ instead of $\calW(E)$.
We say that an outcome $W$ is {\em exhaustive} if it is a maximal feasible set of projects, i.e., $W\cup\{p\}\not\in\calW(E)$ for each $p\in P\setminus W$. We assume that the voters have approval utilities, i.e., the utility of voter $i\in N$
from an outcome $W\subseteq P$ is given by $|A(i)\cap W|$. Our goal is to select
an outcome based on voters' ballots.
An {\em aggregation rule} (or, in short, a {\em rule}) is a function $\calR$ that for each election $E$ selects an outcome $\calR(E)\in\calW(E)$,
called the {\em winning outcome}.}
\smallskip
\paragraph{Load Distribution}{Given an election $E = (N,P,(A(i))_{i\in N}, b, \cost)$,
a {\em load distribution} for an outcome $W\in\calW(E)$ is a collection of rational numbers 
$X = (x_{i, p})_{i\in N, p\in P}$ that form
a feasible solution to the following linear program:
\begin{align}
    &\sum_{i\in N}x_{i, p} = \cost(p)&\text{for all $p\in W$}\\
    &0\le x_{i, p}\le \cost(p) &\text{for all $i\in N, p\in W$}\\
    &x_{i, p} =0 &\text{for all $i\in N$, $p\in P\setminus (W\cap A(i))$}
\end{align}
Given a load distribution $X$ and a voter $i\in N$, we write $X_i=\sum_{p\in P}x_{i, p}$
to denote the total load of~$i$, and 
write $N_p(X)$ to denote the set of voters who pay for $p$ in $X$: 
$$
N_p(X)=\{i\in N: x_{i, p}>0\}.
$$
We denote the set of all load distributions for an outcome $W$ by $\calX(W)$. 
Further, given an election $E = (N,P,(A(i))_{i\in N}, b, \cost)$, we set 
$$
\calWX(E)=\{(W, X): W\in\calW(E), X\in\calX(W)\};
$$
just as before, we write $\calWX(b)$ instead of $\calWX(E)$ to emphasize
the dependence on $b$, and
refer to elements of $\calWX(b)$ as {\em solutions for $b$}
(or simply {\em solutions}).
Intuitively, the linear program in Equations (1)--(3) describes how voters in $N$ can share the cost of projects in $W$; each project is associated with a (rational) cost (or, load), and the cost of a project can only be shared by voters who approve that project. Note that this linear program is feasible as 
long as we assume that every project in $W$ has at least one approval.

We will say that a load distribution $X\in\calX(W)$ is {\em priceable} \cite{MES2} if $\sum_{p\in W} x_{i,p}\leq \frac{b}{n}$ for all $i\in N$. We say that $X$ is {\em equal-shares} if it is priceable and for every $p\in W$ and every pair of voters $i, j\in N_p(X)$ we have $x_{i,p}=x_{j,p}$, that is, the voters who pay for $p$ share the cost of $p$ exactly equally. We will say that a solution $(W,X)$ is equal-shares (resp., priceable) if $X$ is an equal-shares (resp., priceable) load distribution in $\calX(W)$; an outcome $W$ is equal-shares (resp., priceable) if there exists an $X\in\calX(W)$ such that $(W, X)$
is equal-shares (resp., priceable). We denote the set of all equal-share solutions 
for an election $E$ by $\calWX^=(E)$; again, we will sometimes write $\calWX^=(b)$ 
instead of $\calWX^=(E)$ to emphasize the dependence on $b$.}

\begin{example}\label{ex:run}
We will use the following
instance of participatory budgeting as a running example:
\begin{align*}
E &=(N, P, (A(i))_{i\in N},b, \cost) \text{ where }\\
N &=\{1, 2, 3\}, P = \{p_1, p_2, p_3, p_4, p_5\}, b=35, 
A(1) =A(2)=A(3)= P,\\
\cost(p_j) &=6\text{ for }j\in \{1, 2, 3\}, \cost(p_4)=7, \cost(p_5)=10.
\end{align*}
We focus on the outcome $W=P$; note that $\cost(W)=35=b$.

Consider the load distribution $X$ given by
$x_{i, p}=\cost(p)/3$ for all $i\in N$, $p\in P$.
Note that $X$ is priceable and equal-shares
and hence $(W, X)\in\calWX^=(E)$.

In contrast, consider the load distribution $X'$
given by $x'_{1, p}=35/9$ for $p\in \{p_1, p_2, p_3\}$,
$x'_{2, p_4}=7$, $x'_{2, p_1}=x'_{2, p_2}=19/9$, $x'_{2, p_3}=4/9$,
$x'_{3, p_5}=10$, $x'_{3, p_3}= 5/3$ (all values not explicitly specified are $0$).
This load distribution is priceable: we have $X_i=35/3$ for each $i\in N$.
However, it is not equal-shares: all three voters make different contributions towards $p_3$. 

Moreover, the load distribution $X''$ given by $x''_{1, p}=6$
for $p\in\{p_1, p_2, p_3\}$, $x''_{2, p_4}=7$, $x''_{3, p_3}=10$
(all values not specified are $0$) is not priceable: we have $X_1=18 >35/3$.
Nevertheless, it is a load distribution for $W$: all projects in $W$ 
are fully paid for, and each voter only pays for projects in $W$ she approves. 
\end{example}

\smallskip
\paragraph{Extended Justified Representation (EJR)}{
Given a participatory budgeting election $E = (N, P, (A(i))_{i\in N}, b, \cost)$ 
and a subset of projects $T\subseteq P$, 
we say that a group of voters $S$ is {\em $T$-cohesive} if 
$\frac{|S|}{n} \geq \frac{\cost(T)}{b}$ and $T \subseteq \cap_{i\in S} A(i)$. 
An outcome $W\in\calW(E)$ is said to provide {\em Extended Justified Representation (EJR)} for approval utilities if for each $T \subset P$ and each $T$-cohesive group $S$ of voters there exists a voter $i \in S$ such that $|A(i) \cap W| \geq |T|$.
%
%
A rule $\calR$ satisfies {\em Extended Justified Representation} 
if for each election $E$ the outcome $\calR(E)$ provides EJR. 

}
\smallskip
\paragraph{Method of Equal Shares}{ The Method of Equal Shares (MES) is defined for the general PB model, in which voters may have real-valued additive utilities~\cite{MES,MES2}. In this work 
we focus on approval utilities only, and hence we describe MES for approval utilities.

Fix an election $E = (N, P, (A(i))_{i\in N}, b, \cost)$. Initially, 
each voter $i\in N$ is allocated a fixed budget of $\frac{b}{n}$.
MES builds a solution $(W, X)$ for $b$. 
It starts by setting $W=\varnothing$, $x_{i, p}=0$ for all $i\in N$, $p\in P$.
At each iteration, MES selects one project to be added, i.e., it sets 
$W\gets W\cup\{p\}$ for some project $p$ and updates the load distribution $X$ 
so as to pay for $p$. The cost of $p$ is shared equally among its supporters, with the following exception: if, by contributing her share of the cost, the voter would have to exceed her budget $\frac{b}{n}$, she simply contributes her entire remaining budget instead. Formally, 
given the current outcome $W$ and load distribution $X$, 
for each project $p\in P\setminus W$ we compute the quantity 
$$
\rho(p) = \min \left\{\rho: \sum_{i:p\in A(i)}
                \min\left\{\rho, \frac{b}{n}-X_i\right\} = \cost(p)\right\},
$$
add a project $p$ with the smallest value of $\rho(p)$ to $W$, 
and set $x_{i, p} = \min\{\rho(p), \frac{b}{n}-X_i\}$
if $p\in A(i)$ and $x_{i, p} = 0$ otherwise. If $\rho(p) = +\infty$
for all projects in $P\setminus W$, the algorithm terminates. 
Importantly, this may happen even if $W$ is not exhaustive.

\section{Stable Equal-Shares Solutions Allow for Fast Verification of EJR}\label{sec:mes}
We will now define a notion of stability for equal-shares solutions; our definition is designed to ensure that every stable outcome satisfies EJR.

The intuition behind our definition is as follows. If a solution $(W, X)$ satisfies $X_i<\frac{b}{n}$, voter $i$ is willing to contribute her remaining budget to support further projects in $A(i)$. Moreover, even if $i$ does not have enough budget left to
contribute to new projects, she may still prefer to spend money on more cost-efficient projects. To achieve this, voter $i$ may want to withdraw her support from a project and 
reallocate it to a more cost-efficient project.

To formalize this intuition, we first define a weak order 
on the set of equal-shares solutions for a given election.

\paragraph{Per-voter price}
{Given an election $E = (N,P,(A(i))_{i\in N},b,\cost)$, an outcome $W\in\calW(E)$
and an equal-shares load distribution $X\in\calX(W)$, 
we define
the {\em per-voter price} of a project $p\in W$ with respect to $X$ as the 
cost of $p$ divided by the number of voters who pay for $p$ in $X$:
$$
\pi(p, X) = \frac{\cost(p)}{|N_p(X)|};
$$
note that, since $(W, X)$ is an equal-shares solution, this is exactly
the amount that each of the voters who pays for $p$ contributes towards 
the cost of $p$.
If $p\notin W$, we set $\pi(p, X)=+\infty$.
The {\em per-voter price vector} of a load distribution $X\in\calX(W)$ is the list of numbers $\boldsymbol{\pi}(X) = (\pi(p, X))_{p\in P}$, sorted from the smallest to the largest, with ties broken based on a fixed order of projects in $P$. We write $\pi_j(X)$ to denote the $j$-th entry of $\boldsymbol{\pi}(X)$.
Given two solutions $(W, X), (W', X')\in \calWX^=(E)$,
we write $(W, X)\rhd (W', X')$ if 
$\boldsymbol{\pi}(X)\preceq_\text{lex} \boldsymbol{\pi}(X')$, 
where $\preceq_\text{lex}$ is the lexicographic order on $m$-tuples.
Note that for every pair of solutions $(W, X), (W', X')\in\calWX^=(E)$
we have $(W, X)\rhd (W', X')$ or $(W', X')\rhd (W, X)$, i.e., $\rhd$
is a weak order on $\calWX^=(E)$.
}

\paragraph{Stability }
We are now ready to define our notion of stability.
Consider an election $E=(N, P, (A(i))_{i\in N}, b, \cost)$ and a 
solution $(W, X)\in \calWX^=(E)$. For each voter $i\in N$, 
let $z_i=\max\{x_{i, p}: p\in W\}$ 
so $z_i\ge 0$ if $i\in N_p(X)$ for some $p\in W$ and $z_i=-\infty$ otherwise. 
Let $K=\{\frac{\cost(p)}{i}: p\in P, i\in N\}$ and let $\epsilon=\min_{x,y\in K, x\neq y}|x-y|$ 
denote a lower bound on the minimum possible positive difference between two payments in $X$. Let
\begin{equation}\label{eq:kappa}
\kappa_i(X)=\max\left\{z_i-\epsilon, \frac{b}{n}-X_i\right\};
\end{equation}
we refer to the quantity $\kappa_i(X)$ as the {\em capacity} of voter $i$ (and omit $X$ from the notation when it is clear from the context).
Intuitively, $\kappa_i(X)$ is the amount that $i$ is prepared to contribute 
towards a new project that she approves, 
either by using up her unspent budget
(as captured by the $\frac{b}{n}-X_i$ term), or by withdrawing her contribution from   one of the projects with the highest per-voter price among the ones 
she is currently paying for, and using 
these funds to pay for a less expensive project (i.e., one with per-voter price at most $z_i-\epsilon$). Note that this includes the scenario in which voters decrease their contribution to a project by sharing its load with an increased number of voters.
\begin{definition}\label{def:stable}
 A solution $(W, X)\in\calWX^=(E)$ is {\em unstable} if there is a project 
 $p\in P$ and a positive integer $t$ with $t > |N_p(X)|$
 such that $|\{i\in N: p\in A(i), \kappa_i(X) \ge \frac{\cost(p)}{t}\}|\ge t$;
 otherwise we say that
$(W, X)$ is {\em stable}. 
\end{definition}
We denote the set of all stable equal-shares solutions for an election $E$ by $\calWX^*(E)$, 
and write $\calWX^*(b)$ instead of $\calWX^*(E)$ when we want to emphasize the dependence on the budget $b$.


\subsection{Properties of Stable Outcomes: Proportionality and Verifiability}\label{sec:stab+ejr}

Our primary goal in this section is to show that stable outcomes provide EJR. To this end, we establish a property that may be of independent interest (we defer the proof, 
as well as some of the subsequent proofs, to the full version of the paper).

\begin{restatable}{lemma}{lemrestrict}\label{lem:restrict}
Given an election $E$, a subset of voters $V\subseteq N$ and a subset of projects $T\subseteq P$, consider the election $E'=(V, T, (A'(i))_{i\in V}, b, \cost')$ with 
$A'(i)=A(i)\cap T$ for all $i\in V$ and $\cost'(p)=\cost(p)$ for all $p\in T$.
Fix a pair of solutions 
$(W, X)\in\calWX^*(E)$, $(W',X')\in\calWX^*(E')$ and
a voter $i\in V$, and let $r=|\{p: x'_{i, p}>0\}|$.
Then there exists a voter $j\in V$ such that $|(A(i)\cup A(j))\cap W|\geq r$. In particular, there exists a set of projects $P'$ with $P'\subseteq A(i)\cap W'$ such that $P'\subseteq W$ 
and voter $j$ approves at least $r-|P'|$ projects in $W\setminus P'$.
\end{restatable}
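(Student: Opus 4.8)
I will prove the ``in particular'' statement, which implies the first: given $P'\subseteq A(i)\cap W'$ with $P'\subseteq W$ and a voter $j\in V$ approving at least $r-|P'|$ projects of $W\setminus P'$, the set $P'\cup\bigl(A(j)\cap(W\setminus P')\bigr)$ is a size-$r$ subset of $(A(i)\cup A(j))\cap W$. Write $D=\{p: x'_{i,p}>0\}$, so $|D|=r$ and $D\subseteq A(i)\cap W'$; set $P'=D\cap W$ (hence $P'\subseteq A(i)\cap W'\cap W$) and $Q=D\setminus W$. Since $|P'|+|Q|=r$, it suffices to find $j\in V$ approving at least $|Q|$ projects of $W\setminus P'$; if $Q=\varnothing$ this holds with $j=i$, so assume $Q\neq\varnothing$.

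The engine is a single application of the stability of $(W,X)$. Order $D$ as $q_1,\dots,q_r$ with $\pi(q_1,X')\le\cdots\le\pi(q_r,X')$, and let $q_k$ be the cheapest $q_\ell$ with $q_\ell\notin W$. Because $X'$ is equal-shares, $x'_{i,q_k},\dots,x'_{i,q_r}$ equal $\pi(q_k,X'),\dots,\pi(q_r,X')$, each at least $\pi(q_k,X')$, and their sum is at most $X'_i\le b/n$ (the priceability bound for $E'$); hence $(r-k+1)\,\pi(q_k,X')\le b/n$, so $q_k$ is paid for in $X'$ by $|N_{q_k}(X')|=\cost(q_k)/\pi(q_k,X')\ge(r-k+1)\,n\,\cost(q_k)/b$ voters, all in $V$ and all approving $q_k$. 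As $q_k\notin W$ we have $N_{q_k}(X)=\varnothing$; applying Definition~\ref{def:stable} to $q_k$ with $t=|N_{q_k}(X')|$ and using that $(W,X)$ is stable, fewer than $t$ voters approving $q_k$ can have capacity at least $\cost(q_k)/t=\pi(q_k,X')$, so some $j^\star\in N_{q_k}(X')\subseteq V$ has $\kappa_{j^\star}(X)<\pi(q_k,X')$. By \eqref{eq:kappa}, $\tfrac bn-X_{j^\star}<\pi(q_k,X')$, so $X_{j^\star}>\tfrac bn-\pi(q_k,X')\ge 0$, whence $j^\star$ pays for some project of $W$ and $z_{j^\star}\in K$; and $z_{j^\star}-\epsilon<\pi(q_k,X')$ with $z_{j^\star},\pi(q_k,X')\in K$ forces $z_{j^\star}\le\pi(q_k,X')$ by the choice of $\epsilon$. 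Since $(W,X)$ is equal-shares, each payment of $j^\star$ is at most $z_{j^\star}\le\pi(q_k,X')$, so $j^\star$ pays for more than $\tfrac{b/n}{\pi(q_k,X')}-1\ge r-k$ projects of $W$, i.e.\ $|A(j^\star)\cap W|\ge r-k+1$. If $k=1$, take $j=j^\star$ and $P'=\varnothing$ to finish.

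For $k\ge 2$ the obstruction is that the $r-k+1$ projects $j^\star$ represents in $W$ may overlap $P'\supseteq\{q_1,\dots,q_{k-1}\}$; here the stability of $(W',X')$ enters. Suppose $j^\star$ approves some $q_\ell$, $\ell<k$, with $j^\star\notin N_{q_\ell}(X')$. Then $z'_m\ge\pi(q_\ell,X')$ for $m\in N_{q_\ell}(X')$, and $z'_{j^\star}\ge x'_{j^\star,q_k}=\pi(q_k,X')\ge\pi(q_\ell,X')$, so all $|N_{q_\ell}(X')|+1$ voters in $N_{q_\ell}(X')\cup\{j^\star\}$ approve $q_\ell$ and have capacity at least $\pi(q_\ell,X')-\epsilon\ge\cost(q_\ell)/(|N_{q_\ell}(X')|+1)$ (by the $\epsilon$-gap inequality for $E'$), contradicting the stability of $(W',X')$ with $t=|N_{q_\ell}(X')|+1$. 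Hence $j^\star$ pays in $X'$ for every project among $q_1,\dots,q_{k-1}$ that she approves, which pins down how her approvals interact with the peeled projects.

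The remaining, most delicate part is converting this into the required count, and I expect to do it by induction on $|T|$ (equivalently, on $r$): peel $q_1,\dots,q_{k-1}$ into $P'$, pass to the restricted election on $V$ and $T\setminus\{q_1,\dots,q_{k-1}\}$ together with a stable solution there, and apply the inductive hypothesis to that strictly smaller instance. The structural fact of the previous paragraph is exactly what makes this go through --- it ensures that voter $i$ still pays for enough projects after the peeling and that the representative produced by the recursion does not ``waste'' its representation on $q_1,\dots,q_{k-1}$ --- so that the recovered counts combine to at least $r$. Keeping the per-voter-price bookkeeping in $X'$ aligned with the representation counts in $W$ across the recursion is the crux; everything else is the two stability arguments above, one for $(W,X)$ and one for $(W',X')$.
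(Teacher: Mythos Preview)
Your argument has a genuine gap: the induction you propose in the last paragraph is neither carried out nor does it obviously work. After peeling $\{q_1,\dots,q_{k-1}\}$ from $T$, the restriction of $(W',X')$ to $T\setminus\{q_1,\dots,q_{k-1}\}$ need not be stable for the smaller election (removing projects frees budget and may create new instabilities), and taking some other stable solution there severs the link to voter~$i$. Moreover, the structural fact from your third paragraph --- that $j^\star$ pays in $X'$ for every $q_\ell$ with $\ell<k$ that she approves --- does not bound $j^\star$'s \emph{spending in $X$} on $\{q_1,\dots,q_{k-1}\}$, which is what the count actually needs. Your choice $P'=D\cap W$ is also inconsistent with later ``peeling $q_1,\dots,q_{k-1}$ into $P'$,'' since $D\cap W$ may contain projects $q_m$ with $m>k$.

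The paper avoids the induction entirely by choosing the pivot index differently. Instead of the first $k$ with $q_k\notin W$, it takes the first $\ell$ such that \emph{some voter in $V_\ell:=N_{p_\ell}(X')$ does not pay for $p_\ell$ in $(W,X)$}. This is at most your $k$ (if $p_\ell\notin W$ then nobody pays for it in $X$), but crucially it guarantees that for every $q<\ell$ \emph{all} of $V_q$ pays for $p_q$ in $X$, hence $|N_{p_q}(X)|\ge |V_q|$ and $j$'s payment toward $p_q$ in $X$ is at most $\cost(p_q)/|V_q|$. Summing, $j$ spends at most $\sum_{q<\ell}\cost(p_q)/|V_q|$ on $P'=\{p_1,\dots,p_{\ell-1}\}$; combined with the two implications you also derive (total spending of $j$ exceeds $\tfrac{b}{n}-\cost(p_\ell)/|V_\ell|$, and each payment is at most $\cost(p_\ell)/|V_\ell|$), this leaves strictly more than $\sum_{q>\ell}\cost(p_q)/|V_q|$ for projects in $W\setminus P'$. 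Since $\cost(p_q)/|V_q|\ge\cost(p_\ell)/|V_\ell|$ for $q\ge\ell$, one gets $y>r-\ell$ directly --- no recursion needed, and the stability of $(W',X')$ is in fact never used.
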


\noindent We use Lemma~\ref{lem:restrict} to show that 
if $(W,X)\in\calWX^*(E)$ then $W$ provides EJR.
 \begin{theorem}\label{thm:ejr}
If $(W,X)$ is a stable solution for an election $E$, then $W$ provides EJR for approval utilities.
\end{theorem}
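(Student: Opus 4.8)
The plan is to prove EJR directly (no contradiction needed) by invoking Lemma~\ref{lem:restrict} on a carefully chosen sub-election. Fix a set of projects $T\subseteq P$ and a $T$-cohesive group $S$; I must produce a voter $j\in S$ with $|A(j)\cap W|\ge|T|$. The case $T=\varnothing$ is immediate, so assume $T\neq\varnothing$; then $\frac{|S|}{n}\ge\frac{\cost(T)}{b}>0$ together with $|S|\le n$ gives $0<\cost(T)\le b$ and $|S|\ge 1$.

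Next I would build the restricted election $E'=(S,T,(A'(i))_{i\in S},b,\cost')$ from Lemma~\ref{lem:restrict}, where $A'(i)=A(i)\cap T=T$ (since $T\subseteq A(i)$ for every $i\in S$) and $\cost'(p)=\cost(p)$ for $p\in T$, and feed it the ``uniform'' load distribution $X'$ defined by $x'_{i,p}=\cost(p)/|S|$ for all $i\in S$, $p\in T$. It is routine that $(T,X')\in\calWX^=(E')$: every $p\in T$ is fully funded, every voter's total load equals $\cost(T)/|S|\le b/n$ (this is exactly where cohesiveness is used, yielding priceability), and all contributions to a given project coincide. The point I want to stress is that $(T,X')$ is moreover \emph{stable}: in $E'$ we have $N_p(X')=S$ for every $p\in T$, so a destabilizing pair $(p,t)$ in the sense of Definition~\ref{def:stable} would require an integer $t>|N_p(X')|=|S|$ together with $t$ voters of $S$ of sufficient capacity — impossible, since $S$ has only $|S|<t$ voters. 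Hence $(T,X')\in\calWX^*(E')$.

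Now I would apply Lemma~\ref{lem:restrict} with $V=S$, the set $T$, the solutions $(W,X)\in\calWX^*(E)$ and $(W',X')=(T,X')\in\calWX^*(E')$, and an arbitrary $i\in S$. Since $x'_{i,p}=\cost(p)/|S|>0$ for every $p\in T$, we get $r=|\{p:x'_{i,p}>0\}|=|T|$. The ``in particular'' clause of the lemma then supplies a voter $j\in S$ and a set $P'\subseteq A(i)\cap W'=A(i)\cap T=T$ with $P'\subseteq W$ such that $|A(j)\cap(W\setminus P')|\ge r-|P'|=|T|-|P'|$. Since $j\in S$ we have $P'\subseteq T\subseteq A(j)$, so $P'\subseteq A(j)\cap W$; as $P'$ and $W\setminus P'$ are disjoint, this gives $|A(j)\cap W|\ge|P'|+(|T|-|P'|)=|T|$, which is what EJR demands.

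I expect the main obstacle to be applying Lemma~\ref{lem:restrict} in exactly the right form. Its first conclusion, $|(A(i)\cup A(j))\cap W|\ge r$, is by itself \emph{not} sufficient for EJR — the approved projects could be split between $i$ and $j$ — so the argument genuinely relies on the sharper ``in particular'' statement, combined with the observation that the carved-out set $P'\subseteq T$ is approved by \emph{both} $i$ and $j$ precisely because both lie in the cohesive group $S$. A secondary subtlety that must not be glossed over is the choice of witness solution for $E'$: an arbitrary stable solution of $E'$ need not fund all of $T$, whereas the explicit uniform solution $(T,X')$ does fund all of $T$ and is trivially stable, and this is what forces $r=|T|$.
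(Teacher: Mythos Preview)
Your proposal is correct and follows essentially the same route as the paper: construct the restricted election $E'$ on the cohesive group, take the uniform equal-shares solution $(T,X')$ funding all of $T$, observe it is stable (since every project is already paid for by all voters of $E'$), and then invoke the ``in particular'' clause of Lemma~\ref{lem:restrict} together with the fact that $j\in S$ approves all of $P'\subseteq T$. Your write-up is slightly more careful than the paper's in explicitly checking priceability via cohesiveness and in noting that the weaker conclusion $|(A(i)\cup A(j))\cap W|\ge r$ would not suffice.
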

\begin{proof}
Consider a set of projects $T\subseteq P$ and a group of voters $V$
of size at least $\frac{\cost(T)}{b}\cdot n$ 
such that $T\subset A(i)$ for all $i\in V$.
We will show that $|A(i)\cap W| \geq |T|$ for some $i\in V$.
Consider the restricted instance $E'$ where all approvals but those from $V$ to $T$ 
are removed, and
the equal-shares solution $(T,X')$, where $x'_{i,p}=\frac{\cost(p)}{|V|}$ 
for every $i\in V$ and $p\in T$, and all other entries of $X'$ are zero.
Note that every voter $i\in V$ approves $|T|$ projects in $T$.
Further, $(T,X')$ is in $\calWX^*(E')$, 
since all projects that are approved by some voter in $E'$ are selected, 
and the cost of each project is shared by all $|V|$ voters.
Consider an arbitrary voter $i\in V$. 
By Lemma~\ref{lem:restrict}, since $(W, X)$ is a stable solution for the original election $E$, there is a $T'\subset A(i)\cap T$ with $T'\subseteq W$ such that some voter $j\in V$ approves additional $|T|-|T'|$ projects in $W\setminus T'$. But since $j$ approves all projects in $T'\subset T$ as well, it follows that she approves
at least $|T|$ projects in $W$, as desired.
\qed\end{proof}
Next, we will show that stability can be verified very efficiently. Since stability implies EJR, it follows that
for any PB rule that outputs stable solutions (so in particular for the AMES rule, to be defined in the next section), we can quickly verify 
that the associated outcome provides EJR. Importantly, this verification procedure may be much faster than evaluating the rule on a given election. 

This result may appear
counterintuitive, since  checking whether a given outcome provides EJR is 
known to be NP-hard, even in the context of multiwinner voting~\cite{aziz2017justified}.
This apparent contradiction is resolved by observing that the input to our verification
procedure is a solution, i.e., a pair $(W, X)$, rather than an outcome $W$:
the auxiliary information provided by $X$ enables the verification algorithm
to run in nearly-linear time.

\begin{restatable}{theorem}{propverifycomplexity}\label{prop:verifycomplexity}
There is a verification process ${\mathcal V}(W,X)$ that, given an election
$E=(N, P, (A(i))_{i\in N}, b, \cost)$ with $|N|=n$, $|P|=m$ 
and a solution in $\calWX(E)$, 
decides the stability of $(W,X)$ and runs in time $O(n\log n +mn)$.
\end{restatable}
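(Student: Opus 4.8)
The plan is to build the verifier $\mathcal{V}(W,X)$ in three stages: (i) check that the input is a legitimate equal‑shares solution and precompute a few per‑voter and per‑project quantities; (ii) reduce ``instability'' to a purely combinatorial condition that does \emph{not} mention the constant~$\epsilon$; (iii) test that condition project by project by counting rather than sorting. For stage~(i): for every $p\in P$ compute $N_p(X)=\{i:x_{i,p}>0\}$ and check that $x_{i,p}=0$ for $p\notin W\cap A(i)$, that $\sum_{i}x_{i,p}=\cost(p)$ for $p\in W$, and that all positive entries $x_{i,p}$ with $p$ fixed coincide; simultaneously compute $X_i=\sum_{p}x_{i,p}$ and check $X_i\le b/n$. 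If any check fails, reject, since then $(W,X)\notin\calWX^*(E)$. In the same $O(mn)$ pass precompute, for each voter $i$, the values $z_i=\max\{x_{i,p}:p\in W,\ i\in N_p(X)\}$ (with $\max\varnothing=-\infty$) and $c_i=\tfrac{b}{n}-X_i\ge 0$, and, for each project $p$, the approver set $R_p=\{i:p\in A(i)\}$ with $n_p=|R_p|$ and $s_p=|N_p(X)|\le n_p\le n$.

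Stage~(ii) is the crux. By Definition~\ref{def:stable}, $(W,X)$ is unstable iff there are a project $p$ and an integer $t$ with $s_p<t\le n_p$ such that $|\{i\in R_p:\kappa_i(X)\ge \cost(p)/t\}|\ge t$. I would first establish that, for $i\in R_p$ and $1\le t\le n$,
\[
\kappa_i(X)\ge \frac{\cost(p)}{t}\ \Longleftrightarrow\ z_i> \frac{\cost(p)}{t}\ \text{ or }\ c_i\ge \frac{\cost(p)}{t},
\]
which carries no reference to $\epsilon$: since $X$ is equal‑shares we have $z_i\in K\cup\{-\infty\}$ and $\cost(p)/t\in K$ (because $t\in[n]$), distinct elements of $K$ differ by at least $\epsilon$, and $z_i-\epsilon<z_i$, so the term $z_i-\epsilon$ in~\eqref{eq:kappa} clears $\cost(p)/t$ precisely when $z_i>\cost(p)/t$. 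Since $\cost(p)/t$ is decreasing in $t$, for each pair $(i,p)$ with $i\in R_p$ the set of witnessing values of $t$ is an up‑set $\{t\ge \tau_i(p)\}$, with
\[
\tau_i(p)=\min\Bigl\{\ \bigl\lfloor \cost(p)/z_i\bigr\rfloor+1,\ \ \bigl\lceil \cost(p)/c_i\bigr\rceil\ \Bigr\}
\]
(with the convention that each term is $+\infty$ when $z_i=-\infty$, resp.\ $c_i=0$), computable in $O(1)$ time. Hence $(W,X)$ is unstable iff $|\{i\in R_p:\tau_i(p)\le t\}|\ge t$ for some $p$ and some $t$ with $s_p<t\le n_p$.

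For stage~(iii), fix $p$ and cap each $\tau_i(p)$ at $n_p+1$, so the $n_p$ thresholds lie in $\{1,\dots,n_p+1\}$. Using one array of length $n+1$ allocated and zeroed once, and reset after each project only at the $O(n_p)$ touched positions, I would counting‑sort these thresholds and sweep $t=1,\dots,n_p$ maintaining the running count $g_p(t)=|\{i\in R_p:\tau_i(p)\le t\}|$, flagging ``unstable'' as soon as $g_p(t)\ge t$ for some $t\in(s_p,n_p]$. This costs $O(n_p)$ per project after a one‑off $O(n)$ initialization, hence $O(n+\sum_p n_p)=O(mn)$ overall, which is within $O(n\log n+mn)$. (Equivalently, one may sort the voters once by $\kappa_i(X)$ in $O(n\log n)$ time and then, for each $p$, scan them in that order, incrementing a counter $k$ at each approver of $p$ and checking whether $k>s_p$ and the current approver's capacity is $\ge\cost(p)/k$; this variant matches the stated bound exactly.) If no project is flagged, output ``stable''.

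I expect the main obstacle to be stage~(ii). Stability is phrased through the tie‑breaking constant $\epsilon=\min_{x\ne y\in K}|x-y|$, and computing $\epsilon$ straight from this definition is a min‑gap computation on a set of size $\Theta(mn)$, costing $\Theta(mn\log(mn))$, which would break the target running time; worse, $K$ does not obviously have enough structure to compute this gap faster. The resolution is the observation above that $z_i$ and every candidate per‑head price $\cost(p)/t$ already lie in $K$, so subtracting $\epsilon$ merely turns the comparison with $z_i$ from ``$\ge$'' into ``$>$''; this simultaneously removes $\epsilon$ from the computation and exposes the interval structure $\{t\ge\tau_i(p)\}$ that makes the final linear‑time counting sweep both correct and fast. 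A secondary, more routine point is to ensure the per‑project test does not conceal an $\Theta(n_p\log n_p)$ sort, which the $O(n)$ bound on the thresholds (counting sort) or the single global sort of voters takes care of.
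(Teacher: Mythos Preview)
Your proof is correct, and your primary route is actually a bit sharper than the paper's. The paper computes the capacities $\kappa_i$ directly (implicitly treating $\epsilon$ as given), sorts the voters once by $\kappa_i$ in $O(n\log n)$, and then, for each project $p$, scans the sorted list to find the largest $t$ for which the $t$-th approver of $p$ still has capacity at least $\cost(p)/t$---precisely the parenthetical alternative you sketch. Your main approach instead eliminates $\epsilon$ analytically via the observation $z_i-\epsilon\ge \cost(p)/t \Longleftrightarrow z_i>\cost(p)/t$ (valid because both sides lie in $K$), collapses each voter's relevance for project $p$ to a single integer threshold $\tau_i(p)\in[1,n_p{+}1]$, and decides instability for $p$ with a counting-sort sweep in $O(n_p)$ time. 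This buys two things the paper's argument does not: you never need the numerical value of $\epsilon$ (which the paper's proof does not explain how to obtain within the stated bound), and your total running time is $O(mn)$ with no $n\log n$ term---matching what the paper only claims in Theorem~\ref{prop:verifycomplexity2} under the additional assumption that a sorted capacity list is supplied as auxiliary input. The paper's version is shorter to state; yours is more careful about $\epsilon$ and also includes the check that $(W,X)$ is genuinely equal-shares, which the theorem statement (``a solution in $\calWX(E)$'') formally requires but the paper's proof silently assumes.
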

 
In fact, the proof of Theorem~\ref{prop:verifycomplexity} shows that we can achieve
linear-time verifiability if we can pass the list of capacities $\mathcal{K}=(\kappa_i(X))_{i\in N}$ sorted in non-increasing order as auxiliary data to the verifier 
(in addition to the solution $(W,X)$). 
More precisely, there exists a verification algorithm ${\mathcal V}(W,X,\mathcal{K})$ that decides stability in time linear in the size of the election (which matches the runtime of verifying that an output of phragmms satisfies PJR \cite{cevallos2021verifiably}). This is because ${\mathcal V}(W,X,\mathcal{K})$ does not have to sort the capacities; instead, it checks that the capacities have been computed correctly and verifies that $\mathcal{K}$ is non-increasing, all in time $O(nm)$.

\begin{restatable}{theorem}{propverifycomplexity2}\label{prop:verifycomplexity2}
There is a verification process ${\mathcal V}(W,X,\mathcal{K})$ that, given an election
$E=(N, P, (A(i))_{i\in N}, b, \cost)$ with $|N|=n$, $|P|=m$, 
a solution in $\calWX(E)$, and capacities $\mathcal{K}=(\kappa_i(X))_{i\in N}$ sorted in non-increasing order
decides the stability of $(W,X,\mathcal{K})$ and runs in time $O(mn)$.
\end{restatable}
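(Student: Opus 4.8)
The verifier $\mathcal V(W,X,\mathcal K)$ runs in two phases: a validation phase and a stability-testing phase. In the validation phase we first check, in one pass over the (at most $mn$) entries of~$X$, that $(W,X)\in\calWX^=(E)$: we verify the feasibility constraints (1)--(3), and while doing so we compute for every voter $i$ the total load $X_i=\sum_{p}x_{i,p}$ and the maximum contribution $z_i=\max_{p\in W}x_{i,p}$, and for every project $p$ the payer count $|N_p(X)|$, checking that each positive $x_{i,p}$ equals $\cost(p)/|N_p(X)|$ and that each $X_i\le b/n$. We then (re)compute $\epsilon$ and set $\kappa_i(X)=\max\{z_i-\epsilon,\ b/n-X_i\}$ for every $i$, and finally check that the auxiliary list $\mathcal K$ is a permutation of $N$, is non-increasing in the claimed capacities, and that each claimed capacity agrees with the recomputed $\kappa_i(X)$. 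Apart from the recomputation of $\epsilon$ (a minor point discussed below) all of this is clearly $O(mn)$, and at the end we have the voters listed in non-increasing capacity order, with certified capacity values.

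For the stability-testing phase, the key observation is a reformulation of Definition~\ref{def:stable}. Fix a project $p$, let $A_p=\{i\in N: p\in A(i)\}$ be its set of approvers (so $N_p(X)\subseteq A_p$), and let $c_{p,1}\ge c_{p,2}\ge\cdots\ge c_{p,|A_p|}$ be the capacities of the voters in $A_p$ sorted in non-increasing order. For an integer $t$ one has $|\{i\in A_p:\kappa_i(X)\ge \cost(p)/t\}|\ge t$ if and only if $t\le |A_p|$ and $c_{p,t}\ge \cost(p)/t$. Hence $(W,X)$ is unstable, witnessed by $p$, if and only if there is some $t$ with $|N_p(X)|<t\le|A_p|$ such that $c_{p,t}\ge \cost(p)/t$. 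This yields an $O(n)$-per-project test that never sorts: walking through the voters in the order given by $\mathcal K$ (i.e.\ by non-increasing capacity), maintain for $p$ a counter $t$; each time an approver of $p$ is met, increment $t$ — the current capacity is then precisely $c_{p,t}$ — and, if $t>|N_p(X)|$, check whether $c_{p,t}\ge\cost(p)/t$. If this check ever succeeds, output \emph{unstable} with witness $(p,t)$; if it fails for every $t$ and every $p$, output \emph{stable}. (Equivalently, in a single pass over the voters in $\mathcal K$-order one distributes each voter into the approver lists of the projects she approves, building the sorted sublists $(c_{p,t})_t$ for all $p$ in total time $O(\sum_i|A(i)|)=O(mn)$, and then scans each list.) Correctness is immediate from the reformulation, and the total running time is $O(mn)$ for validation plus $O(mn)$ for the $m$ linear scans.

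The main point of the argument is this reformulation: the instability condition a priori quantifies over the many thresholds $\cost(p)/t$, but it collapses to a single monotone scan through the capacities of $p$'s approvers in sorted order, which is exactly what lets the pre-sorted auxiliary data $\mathcal K$ be reused for every project without re-sorting — the step that distinguishes the $O(mn)$ bound here from the $O(n\log n+mn)$ bound of Theorem~\ref{prop:verifycomplexity}. A secondary technical point is keeping the validation phase within budget, in particular recomputing the gap parameter $\epsilon$ from $K=\{\cost(p)/i: p\in P, i\in N\}$ (here one may exploit that, since $(W,X)$ is equal-shares, only the at most $m$ distinct payment values $\cost(p)/|N_p(X)|$ actually occur in $X$, or simply pass $\epsilon$ alongside $\mathcal K$); I expect this to be the one place in the proof requiring a little care rather than the crux.
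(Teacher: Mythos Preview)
Your proposal is correct and follows essentially the same approach as the paper: the paper's argument (given as a remark after Theorem~\ref{prop:verifycomplexity}) is precisely that the $O(n\log n)$ sort in the proof of Theorem~\ref{prop:verifycomplexity} can be replaced by an $O(mn)$ check that the supplied capacities are computed correctly and that $\mathcal K$ is non-increasing, after which the same per-project linear scan through the sorted list computes the quantity $\alpha(p)$ --- which is exactly your $\max\{t: c_{p,t}\ge \cost(p)/t\}$ --- in $O(n)$ per project. Your write-up is in fact more explicit than the paper's on two points: you spell out the validation of $(W,X)\in\calWX^=(E)$, and you flag the recomputation of $\epsilon$ as a potential wrinkle, which the paper does not discuss.
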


To conclude this section, we mention a subtlety in the definition of stability. In our definition, we deem an outcome $(W,X)$ unstable if there is a project $p\in W$ such that the number of voters paying for $p$ could be increased. In the full version of the paper, we consider a weaker notion of stability, which only considers projects not currently included in the outcome. We show that this alternative notion of stability does not in general imply EJR; however, it \textit{does} imply EJR in the special case of multiwinner voting (i.e., unit-cost projects). 


\section{An Adaptive Method of Equal Shares}\label{sec:AMES}

Our notion of stability suggests a natural adaptive algorithm, which, given an 
equal-shares solution $(X, W)$, performs greedy update steps until it reaches a stable outcome.


\paragraph{Greedy update step}
{ If $(W, X)$ is unstable, it admits an {\em update step} 
$(W, X)\xrightarrow{p} (W', X')$, defined as follows. We identify a project $p$ and a size-$t$ set of voters
$V = \{i\in N: p\in A(i), \kappa_i(X)\ge \frac{\cost(p)}{t}\}$ that witness
the instability of $(W, X)$. 
Voters in $V$ all approve $p$; if each of them contributes $\frac{\cost(p)}{|V|}$ then $p$ will be fully paid for. However, if $\kappa_i(X)=z_i-\epsilon$, increasing $i$'s load by $\frac{\cost(p)}{|V|}$ may result in their load exceeding $\frac{b}{n}$. To overcome this, we determine a set of projects $W^-$ to be removed from~$W$. 

To this end, we initialize $W^-=\varnothing$,
and iterate through voters in $V$. For each voter $i\in V$, 
if $\frac{\cost(p)}{|V|} \le \frac{b}{n}-X_i$, we do nothing:
this voter can afford to pay for $p$ from her remaining budget.
Similarly, if $p\in W$ and $x_{i,p}>0$, we do nothing: after the update, 
this voter's contribution towards $p$ will be smaller than $x_{i, p}$.
If $\frac{\cost(p)}{|V|} > \frac{b}{n}-X_i$ and $x_{i,p}=0$, 
we let $q$ be a project with the highest per-voter price 
among the ones that $i$ contributes to, i.e., 
and $\pi(q, X)= \max_{q'\in P} x_{i, q'}$,
and set $W^-\gets W^-\cup\{q\}$ (of course, 
it may be the case that $q$ is already in $W^-$). After we process
all voters in $V$, we set $W' =  (W\setminus W^-)\cup\{p\}$, 
and define $X'$ by setting 
\begin{align*}
&x'_{i, p}=\frac{\cost(p)}{|V|}&\text{ for all } i\in V,\\ 
&x'_{i, p}= 0 &\text{ for all } i\in N\setminus V,\\
&x'_{i, q}=0 &\text{ for all } i\in N\text{ and } q\in W^-,\\
&x'_{i, q}=x_{i, q} &\text{ for all } i\in N, q\in P\setminus(W^-\cup\{p\}). 
\end{align*}
Note that, by construction, if a solution is stable, it does not admit
an update step. Moreover, we say that 
an update step $(W, X)\xrightarrow{p} (W', X')$ is {\em greedy}
if for every other update step $(W, X)\xrightarrow{q} (\overline{W}, \overline{X})$
we have $\pi(p, X')\le \pi(q, \overline{X})$. 
That is, a greedy update step selects a project
(by adding it or increasing its number of contributors)
so as to minimize the per-voter price.
}
The following proposition summarizes the key properties of an update step.

\begin{restatable}{proposition}{propupdate}
    
\label{prop:update}
Suppose $(W, X)\in\calWX^=(E)$, and $(W', X')$
is obtained from $(W, X)$ by an update step. 
Then $(W', X')\in\calWX^=(E)$ and $(W', X')\rhd (W, X)$.
\end{restatable}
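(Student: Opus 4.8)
The plan is to verify the two assertions in turn, reading off everything from the description of the update step. Write $p$ for the added project, $V$ for the witnessing voter set (so $|V|=t>|N_p(X)|$ and $\kappa_i(X)\ge\cost(p)/|V|$ for all $i\in V$), $W^-$ for the set of removed projects, and set $\alpha:=\cost(p)/|V|=\pi(p,X')$; note $p\notin W^-$ by construction. I would first check that $X'\in\calX(W')$: constraints~(2) and~(3) are immediate, since every positive entry of $X'$ is either the value $\alpha$ on a pair $(i,p)$ with $i\in V\subseteq\{j:p\in A(j)\}$ and $p\in W'$, or an inherited entry $x_{i,q}>0$ with $q\in W\setminus W^-$ and $q\neq p$, so that $q\in W'\cap A(i)$; constraint~(1) holds because $\sum_i x'_{i,p}=|V|\cdot\alpha=\cost(p)$ while every other column of $X'$ coincides with the corresponding column of $X$. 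Equal division of $X'$ follows for the same reason: $N_p(X')=V$ and each member of $V$ pays $\alpha$ towards $p$, and all other columns are untouched.

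The substantive part of the first assertion is priceability, i.e.\ $X'_i\le b/n$ for every $i$; granting this, feasibility of $W'$ is automatic, as $\cost(W')=\sum_{q\in W'}\sum_i x'_{i,q}=\sum_i X'_i\le b$. For $i\notin V$ we simply have $X'_i\le X_i\le b/n$. For $i\in V$ one computes $X'_i=\alpha+X_i-x_{i,p}-\sum_{q\in W^-}x_{i,q}$, and I would split according to the three cases of the construction. If $\alpha\le b/n-X_i$, then $X'_i\le\alpha+X_i\le b/n$. If $p\in W$ and $x_{i,p}>0$, then, since $X$ is equal-shares and $|N_p(X)|<|V|$, we have $x_{i,p}=\cost(p)/|N_p(X)|>\alpha$, so $X'_i<X_i\le b/n$. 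In the remaining case ($\alpha>b/n-X_i$ and $x_{i,p}=0$), the inequality $\kappa_i(X)\ge\alpha$ forces $z_i\ge\alpha+\epsilon$; the project $q_i\in W^-$ removed on $i$'s behalf satisfies $x_{i,q_i}=z_i$ by equal division, so $X'_i\le\alpha+X_i-z_i\le X_i-\epsilon<b/n$. This last case --- confirming that withdrawing $z_i$ from $q_i$ frees up enough to pay the new share $\alpha$ of $p$ --- is the \emph{only} delicate point in the argument, and it is precisely where the $-\epsilon$ slack in the definition of $\kappa_i$ and the equal-shares identity $x_{i,q_i}=z_i=\pi(q_i,X)$ are needed; the rest is bookkeeping.

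For the second assertion I would compare $\boldsymbol{\pi}(X')$ with $\boldsymbol{\pi}(X)$ column by column. Any column $q\notin W^-\cup\{p\}$ is unchanged, so $\pi(q,X')=\pi(q,X)$. Each $q\in W^-$ drops out of the outcome, so $\pi(q,X')=+\infty$; but $q=q_i$ for some voter $i$ in the third case above, so its old price is $\pi(q_i,X)=x_{i,q_i}=z_i\ge\alpha+\epsilon>\alpha$. And $\pi(p,X')=\alpha$, while $\pi(p,X)>\alpha$ (it is $+\infty$ if $p\notin W$, and $\cost(p)/|N_p(X)|>\alpha$ if $p\in W$, again using $|N_p(X)|<|V|$). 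Thus every project whose price changed had price strictly above $\alpha$ in $X$; hence $X$ and $X'$ have exactly the same projects priced below $\alpha$, with the same prices, and $X'$ has exactly one more project priced at $\alpha$ (namely $p$). Writing $c$ and $d$ for the numbers of projects priced $<\alpha$ and $=\alpha$ in $X$, the sorted vectors $\boldsymbol{\pi}(X)$ and $\boldsymbol{\pi}(X')$ then agree on positions $1,\dots,c+d$, while position $c+d+1$ --- which exists because $p$ is priced above $\alpha$ in $X$, so $c+d\le m-1$ --- equals $\alpha$ in $\boldsymbol{\pi}(X')$ but exceeds $\alpha$ in $\boldsymbol{\pi}(X)$. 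This yields $\boldsymbol{\pi}(X')\prec_{\text{lex}}\boldsymbol{\pi}(X)$, and therefore $(W',X')\rhd(W,X)$.
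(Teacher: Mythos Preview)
Your proof is correct and follows essentially the same approach as the paper's: the same case split on voters in $V$ for priceability (with the paper listing four cases where you list three, but covering identical ground), and the same key observation that every project whose price changed had old price strictly above $\alpha=\pi(p,X')$. Your write-up is in fact slightly more careful than the paper's in two respects: you explicitly derive $\cost(W')\le b$ from priceability (the paper leaves this implicit), and you spell out the lexicographic comparison of the sorted price vectors position by position, whereas the paper simply asserts $(W',X')\rhd(W,X)$ once the inequality $\pi(p',X)>\pi(p,X')$ for each $p'\in W^-$ has been established.
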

\noindent Our notion of an update step now suggests the following procedure.

Start with an arbitrary pair $(W, X)\in\calWX^=(E)$,
and execute a sequence of greedy update steps from 
$(W, X)$ until a stable solution $(W^*, X^*)$ is reached. A pseudocode description of this iterative  algorithm, which we call the Adaptive Method of Equal Shares (AMES), 
is given in Algorithm~\ref{alg:ames}. We say that we run AMES \textit{from scratch} if the starting solution $(W,X)$ satisfies $W=\varnothing$, $x_{i, p}=0$
for all $i\in N, p\in P$.

In more detail, in each iteration our algorithm loops over all projects, and checks
if there is a project for which the number of supporters can be increased (lines 5--14);
we use the convention that $\max\varnothing = -\infty$. 
It also keeps track of the best such project ($p^*$), 
as measured by the price per voter (lines 9--11).
If some such project has been identified, the algorithm 
iterates through all voters whose contribution towards $p$ 
is about to increase in a way that exceeds their remaining budget;
for each such voter, it identifies one project that this voter is currently 
paying for, and adds it to the set of projects to be removed from $W$ (lines 16--19). 
It then updates $W$ and $X$ accordingly (lines 20--23).

\begin{algorithm2e} \SetAlgoNoLine
\caption{AMES (Adaptive Method of Equal Shares)}\label{alg:ames}
  \KwIn{$E = (N,P,(A(i))_{i\in N}, b, \cost), (W,X) \in \calWX^=(E)$}
  \KwOut{$(W^*,X^*) \in \calWX^*(E)$}
  \Repeat{\ $\textrm{flag} = {\tt false}$}{
    $(\kappa_i)_{i\in N} \gets \text{capacities}(W, X)$ (as per~\eqref{eq:kappa})\\
    $\textit{flag} \gets {\tt false}$\\
    $\pi^* = +\infty$\\
    \For{$p\in P$}{
        $t_p \gets \max\{t\in\mathbb N: t> |N_p(X)|, 
            |\{i\in N: p\in A(i), \kappa_i \ge \cost(p)/t\}|
                     \ge t\}$\\
        \If{$t_p\neq-\infty$}
          {
          $\textit{flag} \gets {\tt true}$\\
          \If {$\cost(p)/t_p < \pi^*$}
              {$\pi^*\gets \cost(p)/t_p$\\ 
               $p^*\gets p$
              }
          }
    }
    \If{$\text{flag} = {\tt true}$}
     {
       $S \gets \{i\in N: p^*\in A(i), x_{i,p^*}=0 \text{ and } \pi^*>\frac{b}{n}-X_i\}$\\
        \For{$i\in S$}{
          Let $p^-(i)$ be some project from $\argmax_{p\in W}\{x_{i,p}\}$
        }
        $W^- \gets \{p^-(i): i\in S\}$\\
        $W\gets (W \cup \{p^*\})\setminus W^{-}$\\
        $x_{i,p^*}\gets \pi^*$ for all $i\in N$ such that $p\in A(i)$ and $\kappa_i\ge \pi^*$\\
        $x_{i,p}\gets 0$ for all $p\in W^-,i\in N$\\
     }
  }
  \Return{$(W, X)$}
\end{algorithm2e}
We will use the instance $E$ from Example~\ref{ex:run} to illustrate the different kinds of update steps AMES can perform.

Suppose the priority order over the projects is $p_1>p_2>p_3>p_4>p_5$.
Consider the stable outcome $(W,X)$ given by 
$W=\{p_1,p_2,p_3,p_4\}$,  
$x_{1,p}=x_{2,p}=3$ for $p\in\{p_1, p_2, p_3\}$, $x_{1,p}=x_{2,p}=0$
for $p\in \{p_4,p_5\}$, 
$x_{3,p_4}=7$ and  $x_{3,p}=0$ for $p\neq p_4$.
When initialized on $E$ and $(W,X)$, AMES performs the following steps: 

\begin{itemize}
\item[(1)] $(W,X)\xrightarrow{p_1} (W^1,X^1)$
where 
$$
W^1=W, x^1_{i,p_1}=2\text{ and }x^1_{i,p}=x_{i,p} \text{ for } p\neq p_1, i\in N.
$$
\item[(2)] $(W^1,X^1)\xrightarrow{p_2} (W^2,X^2)$.
where 
$$
W^2=W^1, x^2_{i,p_2}=2\text{ and }x^2_{i,p}=x^1_{i,p} \text{ for } p\neq p_2, i\in N.
$$
\item[(3)] $(W^2,X^2)\xrightarrow[p_4]{p_3} (W^3,X^3)$
where 
\begin{align*}
&W^3=W^2\setminus\{p_4\}, x^3_{i,p_3}=2\text{ for $i\in [3]$}, x^3_{3,p_4}=0,\\
&\text{ and } x^3_{i,p}=x^2_{i,p}\text{ for all other }(i, p)\in N\times P
\end{align*}
\item[(4)]  $(W^3,X^3)\xrightarrow{p_4} (W^4,X^4)$
where 
$$
W^4=W^3\cup\{p_4\}, x^4_{i,p_4}=\frac{7}{3}\text{ and }x^4_{i,p}=x^3_{i,p}\text{ for }p\neq p_4,i\in N.
$$
\item[(5)] $(W^4,X^4)\xrightarrow{p_5} (W^5,X^5)$ where 
$$
W^5=W^4\cup\{p_5\}, x^5_{i,p_5}=\frac{10}{3}\text{ and  }x^5_{i,p}=x^4_{i,p}\text{ for }p\neq p_5,i\in N.
$$

\end{itemize}
Update steps (1) and (2) illustrate increasing the number of supporters of a project that is already included in the solution.
Update step (3) additionally illustrates the removal of a project ($p_4$), which enables the algorithm to increase the number of supporters of $p_3$.
Finally, update steps (4) and (5) illustrate simple addition of new projects.
Note that after step (2) the transition $(W^2,X^2)\xrightarrow[p_3]{p_4} (W',X')$ would be another valid update step, where 
\begin{align*}
W'=W^2\setminus\{p_3\}, x'_{i,p_4}=\frac{7}{3} \text{ for $i\in N$ }, 
x'_{j,p_3}=0\text{ for }j\in[2], \text{ and }\\
x'_{i,p}=x^2_{i,p}\text{ for all other }(i, p)\in N\times P,
\end{align*}
but it is not greedy, since the update in step (3) is strictly better
according to $\rhd$.

\smallskip
The next lemma shows that the per-voter prices of projects are non-decreasing in the order in which they were added.
\begin{restatable}{lemma}{lemdecappeal}
    
\label{lem:dec-appeal}
Consider two consecutive greedy update steps 
$(W, X)\xrightarrow{p'} (W', X')$ and
$(W', X')\xrightarrow{p''} (W'', X'')$.
Suppose that $\pi(p',X')<\pi(p',X)$ and $\pi(p'',X'')<\pi(p'',X')$. Then 
$\pi(p', X')\le \pi(p'', X'')$.
\end{restatable}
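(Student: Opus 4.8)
The plan is to leverage the greediness of the first update step. By definition, if $(W,X)\xrightarrow{p'}(W',X')$ is greedy then $\pi(p',X')\le\pi(q,\overline{X})$ for \emph{every} update step $(W,X)\xrightarrow{q}(\overline{W},\overline{X})$ available at $(W,X)$. So the main goal is to exhibit, starting from $(W,X)$, an update step on the project $p''$ whose per-voter price is at most $\pi(p'',X'')$; greediness then immediately gives $\pi(p',X')\le\pi(p'',X'')$. There is one corner case — when $p''$ was removed from $W$ during the first step and re-inserted during the second — in which no such step exists at $(W,X)$, and there I would instead argue the inequality directly from the $\epsilon$-gap between the relevant prices.

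Write $V'=N_{p'}(X')$ and $V''=N_{p''}(X'')$ for the contributor sets produced by the two steps, so $\pi(p',X')=\cost(p')/|V'|$ and $\pi(p'',X'')=\cost(p'')/|V''|$. The heart of the argument is the claim that \emph{every voter $i\in V''$ already satisfies $\kappa_i(X)\ge\pi(p'',X'')$}. Granting this, set $U=\{i\in N:p''\in A(i),\ \kappa_i(X)\ge\pi(p'',X'')\}$; then $V''\subseteq U$, so $|U|\ge|V''|$. If $|U|>|N_{p''}(X)|$, then, since $\cost(p'')/|U|\le\cost(p'')/|V''|=\pi(p'',X'')$, the pair $(p'',t=|U|)$ witnesses the instability of $(W,X)$ and yields an update step on $p''$ at $(W,X)$ of per-voter price at most $\cost(p'')/|U|\le\pi(p'',X'')$; greediness of the first step finishes this case. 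If instead $|U|\le|N_{p''}(X)|$, then $|N_{p''}(X')|<|V''|\le|U|\le|N_{p''}(X)|$, and examining the three possibilities ($p''=p'$; $p''\ne p'$ and $p''\notin W^-$; $p''\in W^-$) rules out the first two, leaving $p''\in W^-$. But a project enters $W^-$ only via some voter $i\in V'$ with $x_{i,p'}=0$, $\pi(p',X')>\tfrac{b}{n}-X_i$, and $p^-(i)=p''$; for such $i$ one gets $\kappa_i(X)=z_i-\epsilon\ge\pi(p',X')$ and hence $\pi(p'',X)=x_{i,p''}=z_i\ge\pi(p',X')+\epsilon$, while $|V''|\le|N_{p''}(X)|$ gives $\pi(p'',X'')\ge\pi(p'',X)$, so $\pi(p'',X'')\ge\pi(p',X')+\epsilon>\pi(p',X')$. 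Either way $\pi(p',X')\le\pi(p'',X'')$.

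It remains to prove the claim, which I would do by a case analysis on how the first update step treats $i$. First record that $N_{p'}(X)\subseteq V'$: if $j$ pays for $p'$ in $X$ then $x_{j,p'}=\pi(p',X)>\pi(p',X')$, so these two members of $K$ differ by at least $\epsilon$, whence $z_j\ge x_{j,p'}\ge\pi(p',X')+\epsilon$ and $\kappa_j(X)\ge z_j-\epsilon\ge\pi(p',X')$. Consequently a voter $i\notin V'$ only loses contributions in the first step, so both $z_i$ and $\tfrac{b}{n}-X_i$ move in the direction that does not decrease $\kappa_i$, giving $\kappa_i(X)\ge\kappa_i(X')\ge\pi(p'',X'')$; the same monotonicity holds when $i\in V'$ but $i$ either already contributed to $p'$ (its share of $p'$ strictly drops) or now contributes to $p'$ out of spare budget without $p'$ becoming its top contribution. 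The two remaining sub-cases are handled by hand: (i) if $i\in V'$ pays for $p'$ out of spare budget and $p'$ becomes its top contribution, then $\kappa_i(X)=\tfrac{b}{n}-X_i\ge\pi(p',X')$, and splitting on which term realizes $\kappa_i(X')$ one uses either $\tfrac{b}{n}-X_i\ge\pi(p'',X'')+\pi(p',X')$ or $\pi(p',X')-\epsilon\ge\pi(p'',X'')$; (ii) if $i\in V'$ had a project $q=p^-(i)$ removed, then $\kappa_i(X)=z_i-\epsilon=x_{i,q}-\epsilon$ with $x_{i,q}\in K$, and if $\kappa_i(X')=\tfrac{b}{n}-X_i'$ one checks $\tfrac{b}{n}-X_i'<x_{i,q}$, so $\pi(p'',X'')\le\kappa_i(X')<x_{i,q}$ forces $\pi(p'',X'')\le x_{i,q}-\epsilon=\kappa_i(X)$ because $\pi(p'',X'')$ and $x_{i,q}$ are distinct elements of $K$.

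The main obstacle is precisely this claim: a single update step can simultaneously raise a voter's total load (shrinking the $\tfrac{b}{n}-X_i$ part of her capacity) and raise her largest single contribution (shrinking the $z_i-\epsilon$ part), so it is not a priori obvious that contributors of $p''$ after the step had enough capacity before it. The resolution rests on the discreteness of the price set $K$ — every positive payment and every per-voter price lies in $K$, so a strict inequality between two of them is in fact a gap of at least $\epsilon$, which is exactly the slack the definition of $\kappa_i$ builds in — together with the separate, purely combinatorial treatment of the re-insertion case $p''\in W^-$.
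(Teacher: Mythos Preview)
Your core idea --- show that every $i\in V''$ already satisfies $\kappa_i(X)\ge\pi(p'',X'')$, so an update step on $p''$ was available at $(W,X)$ --- is exactly what the paper uses, but the paper runs it as a proof by contradiction (assuming $\pi(p',X')>\pi(p'',X'')$) rather than directly. That difference is not cosmetic: your direct claim need not hold, and your case~(ii) is where the argument breaks.

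The unjustified step is ``one checks $\tfrac{b}{n}-X_i'<x_{i,q}$''. This inequality follows when $q=p^-(i)$ is the \emph{only} project in $W^-$ that $i$ pays for, since then $\tfrac{b}{n}-X_i'=\tfrac{b}{n}-X_i+x_{i,q}-\pi(p',X')$ and $\tfrac{b}{n}-X_i<\pi(p',X')$. But $W^-$ is the union of the $p^-(j)$ over all $j\in S$, so $i$ may contribute to several projects in $W^-$; each extra removed project raises $\tfrac{b}{n}-X_i'$ by an amount exceeding $\pi(p',X')$, and then $\tfrac{b}{n}-X_i'$ can exceed $x_{i,q}$. For instance, if $\tfrac{b}{n}=10$ and $i$ pays $5$ to each of two projects $q_1,q_2$ in $(W,X)$ (so $\kappa_i(X)=5-\epsilon$), and both end up in $W^-$ while $p'$ is added at price $4$, then $\tfrac{b}{n}-X_i'=6>5=x_{i,q}$ and $\kappa_i(X')=6$. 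A second step selecting $p''$ at price~$6$ would have $i\in V''$ yet $\kappa_i(X)=5-\epsilon<6$, so $V''\not\subseteq U$, which is the inclusion both branches of your subsequent argument rest on.

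The remedy is precisely the contradiction hypothesis. Under $\pi(p',X')>\pi(p'',X'')$, whenever $i$'s capacity rose because some removed $q\in W^-$ had $x_{i,q}>0$, you get $x_{i,q}=\pi(q,X)>\pi(p',X')>\pi(p'',X'')$; since $x_{i,q}$ and $\pi(p'',X'')$ lie in $K$, this forces $x_{i,q}\ge\pi(p'',X'')+\epsilon$, hence $\kappa_i(X)\ge z_i-\epsilon\ge x_{i,q}-\epsilon\ge\pi(p'',X'')$. If instead the $z$-term rose, then $i\in V'$ and $\kappa_i(X)\ge\pi(p',X')>\pi(p'',X'')$. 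With the claim established under this extra assumption, greediness of the first step gives the contradiction directly, and your separate $p''\in W^-$ case becomes unnecessary (under the hypothesis, $\pi(p'',X)>\pi(p',X')>\pi(p'',X'')$ already yields $|V''|>|N_{p''}(X)|$).
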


Lemma~\ref{lem:dec-appeal} is the key to showing that, by performing greedy update steps,
AMES converges quickly. 
\begin{restatable}{proposition}{propgreedy}\label{prop:greedy-k}
Suppose that AMES is executed on input $(W, X)$ and outputs a solution $(W^*, X^*)$.
Then it executes at most $|\{p\in P: \pi(p,X^*)<\pi(p,X)\}|$ greedy update steps. 
\end{restatable}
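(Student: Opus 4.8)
The plan is to track, along the run of AMES on input $(W,X)$, the sequence of \emph{chosen} projects $p^{(1)},\dots,p^{(L)}$ (at step $s$ the project that is added, or given extra contributors) together with their per-voter prices $\pi_s:=\pi(p^{(s)},X^{(s)})$ immediately after step $s$, where $X^{(0)}=X$ and $X^{(s)}$ is the load distribution after $s$ greedy update steps. The first observation is that an update step always strictly lowers the per-voter price of the chosen project: if $p^{(s)}\notin W^{(s-1)}$ then $\pi(p^{(s)},X^{(s-1)})=+\infty>\pi_s$, and if $p^{(s)}\in W^{(s-1)}$ then by Definition~\ref{def:stable} its number of contributors strictly increases, so again $\pi(p^{(s)},X^{(s)})<\pi(p^{(s)},X^{(s-1)})$. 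Hence the hypotheses of Lemma~\ref{lem:dec-appeal} are automatically satisfied for every pair of consecutive steps, giving the \emph{price chain} $\pi_1\le\pi_2\le\dots\le\pi_L$.

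Next I would prove, by a joint induction on the number of steps performed so far, three statements: (A) the price chain $\pi_1\le\dots\le\pi_{L'}$; (B) the projects $p^{(1)},\dots,p^{(L')}$ are pairwise distinct; and (C) for every $j\le L'$ the project $p^{(j)}$ still lies in $W^{(L')}$ and still has exactly the contributor set it acquired at step $j$, so that $\pi(p^{(j)},X^{(L')})=\pi_j$. Statement (A) for $L'+1$ follows from Lemma~\ref{lem:dec-appeal} applied to steps $L',L'+1$. For (B): if $p^{(L'+1)}=p^{(j)}$ for some $j\le L'$, then by (C) it still has per-voter price $\pi_j$ just before step $L'+1$, and choosing it requires $t>|N_{p^{(j)}}(X^{(L')})|$, i.e. $\pi_{L'+1}<\pi_j$, contradicting $\pi_j\le\pi_{L'+1}$. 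For (C): if some $p^{(j)}$ with $j\le L'$ were removed at step $L'+1$, it would be removed as $p^-(i)$ for a voter $i\in S$ who is maxed out ($\tfrac{b}{n}-X_i<\pi_{L'+1}\le\kappa_i$), so $\kappa_i=z_i-\epsilon$; but $z_i=x^{(L')}_{i,p^{(j)}}=\pi(p^{(j)},X^{(L')})=\pi_j$ by (C), whence $\kappa_i=\pi_j-\epsilon\ge\pi_{L'+1}$, again contradicting $\pi_j\le\pi_{L'+1}$. Therefore $L$ equals the number of distinct chosen projects, each chosen project $q$ ends up in $W^*$, and $\pi(q,X^*)=\pi_{s_q}$, where $s_q$ is the unique step at which $q$ is chosen.

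It then remains to show that every chosen project $q$ lies in $\{p:\pi(p,X^*)<\pi(p,X)\}$; combined with distinctness this gives the bound. Since $\pi(q,X^*)=\pi_{s_q}<\pi(q,X^{(s_q-1)})$, it suffices to prove $\pi(q,X^{(s_q-1)})\le\pi(q,X)$. As $q$ is not chosen before step $s_q$, on $[0,s_q-1]$ the value $\pi(q,\cdot)$ can only change via a removal, which raises it to $+\infty$ and (since $q$ is not re-added before $s_q$) keeps it there. So if $q$ is never removed before $s_q$ then $\pi(q,X^{(s_q-1)})=\pi(q,X)$, and if $q\notin W$ then $\pi(q,X)=+\infty$; in both cases we are done. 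The remaining case — $q\in W$, removed at some step $\tilde s<s_q$, then re-added at $s_q$ — is the main obstacle: here $\pi(q,X)=\cost(q)/c$ with $c=|N_q(X)|$, and one must show the re-addition recruits strictly more than $c$ payers, i.e. $\pi_{s_q}<\cost(q)/c$. The plan is to argue that removing $q$ hands $\cost(q)/c$ back to each of its original payers other than the voter(s) triggering the removal (checking this case by case: such a payer either does nothing, or lowers an existing contribution to $p^{(\tilde s)}$, or picks up $p^{(\tilde s)}$ from spare budget, or additionally sheds an even more expensive project, and in each case ends with $\tfrac{b}{n}-X_j\ge\cost(q)/c$), so their capacities stay at least $\cost(q)/c$; combined with the price chain and the greedy (best-response) choice of update steps, this should force $q$ to be shared by strictly more than $c$ voters when it re-enters. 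I expect the bookkeeping for this last case — controlling the capacities of $q$'s former payers between its removal and re-addition while arbitrary other update steps (and possible further removals) intervene — to be the technically delicate part of the argument; everything preceding it is a direct consequence of the price chain.
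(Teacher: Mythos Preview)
Your argument tracks the paper's proof closely: derive the nondecreasing price chain $\pi_1\le\cdots\le\pi_L$ from Lemma~\ref{lem:dec-appeal}, then show that chosen projects are pairwise distinct (your (B), the paper's Case~1) and that, once chosen, a project keeps its payer set forever (your (C), the paper's Case~2). From these two facts the paper simply asserts the identity $L=|\{p:\pi(p,X^*)<\pi(p,X)\}|$ without further case analysis.

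The situation you isolate at the end --- $q\in W$ removed as collateral at some step $\tilde s$ before it is ever chosen, and later re-entered at step $s_q$ --- is not treated separately in the paper's proof either; the paper passes directly from ``once chosen, never changed'' to the count. So you are not missing an idea that the paper supplies. That said, your sketched resolution of this case does not work as written: bounding the spare budget of $q$'s original payers only immediately after step $\tilde s$ says nothing about steps $\tilde s+1,\dots,s_q-1$ (those voters may pick up further chosen projects and their capacities may drop), and even if $c-1$ of them retained capacity $\ge\cost(q)/c$ that would still not force more than $c$ sharers at step $s_q$. In short, your write-up is at least as complete as the paper's on this point; whether the residual case genuinely needs an additional argument beyond what either supplies is a separate question from the comparison you asked for.
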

\noindent 
The proof of Proposition~\ref{prop:greedy-k} shows that 
in each update step we either add a project from $W^*\setminus W$ 
or else lower the per-voter price of an existing project in $W$.
A consequence of this result is that
we can compute a stable solution in polynomial time, 
by starting from an arbitrary equal-shares 
solution and performing greedy update steps.

\begin{restatable}{theorem}{proplsoptmesruntime}\label{prop:lsoptmes-runtime}
Given an election $E = (N,P,(A(i))_{i\in N}, b, \cost)$ and an outcome $(W,X)\in\calWX^=(E)$,
Algorithm~\ref{alg:ames} outputs $(W^*,X^*)\in \calWX^*(E)$ 
by performing at most $|\{p\in P: \pi(p,X^*)<\pi(p,X)\}|=O(m)$ 
update steps; each update step can be completed in time $O(n\log n+ nm)$.
\end{restatable}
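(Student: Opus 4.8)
The plan is to establish three things in sequence: (i) the algorithm terminates and outputs a stable solution, (ii) the number of greedy update steps it performs is bounded by $|\{p \in P : \pi(p,X^*) < \pi(p,X)\}| = O(m)$, and (iii) each individual update step can be implemented in time $O(n\log n + nm)$. Item (i) follows from Proposition~\ref{prop:update}, since each update step produces a strictly $\rhd$-larger equal-shares solution, and $\calWX^=(E)$ is finite (each $x_{i,p}$ ranges over $\{0, \cost(p)/n, \ldots, \cost(p)\}$); termination with a stable solution is immediate from the loop exit condition (the flag is \texttt{false} exactly when no project witnesses instability). Item (ii) is precisely the content of Proposition~\ref{prop:greedy-k}, which we may invoke directly. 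So the real work in \emph{this} theorem is item (iii): bounding the cost of a single iteration of the \texttt{Repeat} loop.

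For the per-iteration cost, I would walk through Algorithm~\ref{alg:ames} line by line. Computing the capacities $(\kappa_i)_{i\in N}$ via~\eqref{eq:kappa} requires, for each voter $i$, the value $z_i = \max\{x_{i,p}: p\in W\}$ and the total load $X_i$; both can be obtained in $O(nm)$ total time by scanning the matrix $X$ once (the constant $\epsilon$ and the value $b/n$ are fixed and precomputable). The main loop over projects $p\in P$ (lines 5--14): for each $p$ we must compute $t_p$, the largest $t > |N_p(X)|$ such that at least $t$ voters who approve $p$ have $\kappa_i \ge \cost(p)/t$. The natural way to do this is to sort the capacities of the voters in $A(p)$ (equivalently, sort the global list $(\kappa_i)_{i\in N}$ once, in $O(n\log n)$ time, at the top of the iteration, and then for each $p$ filter to approvers); having the sorted list, for a candidate value $t$ the condition "at least $t$ approvers of $p$ have $\kappa_i \ge \cost(p)/t$" is monotone-ish and can be checked by walking down the sorted approver-capacities, so $t_p$ is found in $O(|A(p)|)$ or $O(n)$ time per project, i.e. $O(nm)$ over all projects. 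Lines 9--11 (tracking the best $p^*$) are $O(1)$ per project. Lines 16--23: identifying $S$ is $O(n)$; for each $i \in S$ finding $p^-(i) \in \argmax_{p\in W} x_{i,p}$ is $O(m)$, so $O(nm)$ total; updating $W$ and the $O(n)$ relevant entries of $X$ is $O(n+m)$. Summing, one iteration costs $O(n\log n + nm)$, which dominates to $O(nm)$ if one sorts the capacities once outside — but stated per step it is $O(n\log n + nm)$ as claimed. Combined with the $O(m)$ bound on the number of steps, this also yields the overall runtime.

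The step I expect to require the most care is the computation of $t_p$ for each project within the claimed budget. Naively one might try all values $t \in \{|N_p(X)|+1, \ldots, n\}$ and for each re-count qualifying voters, giving $O(n^2)$ per project; the point is that once the approvers' capacities are sorted in non-increasing order as $\kappa_{(1)} \ge \kappa_{(2)} \ge \cdots$, the largest feasible $t$ is $\max\{ t : \kappa_{(t)} \ge \cost(p)/t \}$ restricted to $t > |N_p(X)|$, and since $\kappa_{(t)}$ is non-increasing in $t$ while $\cost(p)/t$ is decreasing in $t$, one can still not avoid a scan — but a single linear scan over the sorted prefix suffices to find the maximal such $t$, giving $O(n)$ per project after the one-time $O(n\log n)$ sort. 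I would also double-check the edge convention $\max\varnothing = -\infty$ for $t_p$ when no valid $t$ exists, and confirm that the set $S$ computed in line 16 matches the set $V \setminus (\{i : \cost(p)/|V| \le b/n - X_i\} \cup \{i : x_{i,p}>0\})$ from the update-step definition, so that the implementation faithfully realizes a greedy update step as analyzed in Proposition~\ref{prop:update} and Lemma~\ref{lem:dec-appeal}. With these pieces in place, the theorem follows by chaining the step-count bound of Proposition~\ref{prop:greedy-k} with the per-step cost.
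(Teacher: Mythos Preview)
Your proposal is correct and follows essentially the same approach as the paper: invoke Proposition~\ref{prop:greedy-k} for the step count, then bound a single iteration by computing capacities in $O(nm)$, sorting them once in $O(n\log n)$, and for each project extracting the maximal feasible $t$ via a linear scan of the sorted approver capacities. The only cosmetic difference is that the paper precomputes each voter's argmax project $r_i$ together with the capacities (rather than computing $p^-(i)$ on demand for $i\in S$), but both versions yield the same $O(n\log n+nm)$ bound per step.
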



\subsection{On Using AMES Adaptively}\label{sec:ames-consistency}
An advantage of our notion of stability (Definition~\ref{def:stable}) is that it 
is easily checkable, and hence one can quickly verify that
an outcome of AMES provides EJR. In particular, this means that 
the execution of AMES can be outsourced to a non-trusted, but computationally
powerful third party. On the other hand, it is arguably easier to explain that a solution was obtained by running AMES from scratch with a virtual budget $b'$, 
as opposed to being obtained by computing the solution for 
the true budget $b$ and then gradually increasing the budget to $b'$ 
and adapting the solution as necessary.

It turns out that a  modification of AMES guarantees that, independently of which equal shares-outcome $(W,X)$ AMES is initialized with,
it will output the same solution $(W^*,X^*)$ in either case. That is,  
we can execute AMES adaptively, starting with budget $b$ and increasing
the budget until an exhaustive solution is found, yet explain the result
as the output of AMES on the final budget $b'$.

Consider two stable solutions $(W,X)$ and $(W',X')$ for the election 
$E = (N,P,(A(i))_{i\in N}, b, \cost)$, and suppose that $W\neq W'$ or $X\neq X'$.
Let $\boldsymbol{\pi}(X')$ and $\boldsymbol{\pi}(X)$ be the corresponding price-per-voter vectors.
Let $j$ be the first index such that $\pi_j(X)=\pi(p,X)\neq \pi(q,X') =\pi_j(X')$ or $\pi_j(X) = \pi(p,X)= \pi(q,X')=\pi_j(X')$ but $p\neq q$.
We will now show that the former case cannot occur.
\begin{restatable}{proposition}{propdiffdueties}\label{prop:diff-due-ties}
	$\pi(p,X)= \pi(q,X')$ and so in particular $p\neq q$.
\end{restatable}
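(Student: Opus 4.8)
The plan is to argue by contradiction. Suppose the first case occurs; since the roles of $(W,X)$ and $(W',X')$ are symmetric, we may assume $\pi_j(X)<\pi_j(X')$, and I will show this forces $(W,X)$ to be unstable, contrary to hypothesis. Write $\alpha=\pi_j(X)$. By minimality of $j$, positions $1,\dots,j-1$ of $\boldsymbol{\pi}(X)$ and of $\boldsymbol{\pi}(X')$ carry the same projects with the same values; let $Q$ be this set of $j-1$ projects, and note every $q'\in Q$ has per-voter price at most $\alpha$ in both solutions. Let $p$ be the project in position $j$ of $\boldsymbol{\pi}(X)$, so $\pi(p,X)=\alpha$ and $p\notin Q$. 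Since $\boldsymbol{\pi}(X')$ is sorted and $\pi_j(X')>\alpha$, every project outside $Q$ has per-voter price strictly above $\alpha$ in $X'$; in particular $\pi(p,X')>\alpha$, so $|N_p(X')|<|N_p(X)|$ (with $|N_p(X')|=0$ if $p\notin W'$); write $t=|N_p(X)|$. Finally, since $(W,X)$ is equal-shares and priceable, a voter paying for $p$ in $X$ pays exactly $\alpha$ there, and hence $\alpha\le\frac bn$.

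Next I would extract a ``saturated'' voter from the stability of $(W',X')$. Applying Definition~\ref{def:stable} to $(W',X')$ with the project $p$ and the integer $t>|N_p(X')|$, and using $\cost(p)/t=\pi(p,X)=\alpha$, stability yields $|\{i\in N:p\in A(i),\ \kappa_i(X')\ge\alpha\}|<t$. As all $t$ voters of $N_p(X)$ approve $p$, some $i^*\in N_p(X)$ satisfies $\kappa_{i^*}(X')<\alpha$. Unpacking \eqref{eq:kappa}, this gives $\frac bn-X'_{i^*}<\alpha$ and that the largest single contribution of $i^*$ in $X'$ is below $\alpha+\epsilon$; since those contributions and $\alpha$ all lie in $K$, whose distinct elements are $\epsilon$ apart, $i^*$ in fact contributes at most $\alpha$ to every project in $X'$, hence (by the previous paragraph) pays only for projects of $Q$ in $X'$. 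Therefore $\sum_{q'\in Q}x'_{i^*,q'}=X'_{i^*}>\frac bn-\alpha$.

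The crucial step is to compare the payments of $i^*$ towards $Q$ in the two solutions. In $X$, voter $i^*$ contributes $\alpha$ towards $p\notin Q$ and at most $\frac bn$ in total, so $\sum_{q'\in Q}x_{i^*,q'}\le\frac bn-\alpha<\sum_{q'\in Q}x'_{i^*,q'}$. Because every $q'\in Q$ has the same per-voter price in $X$ and in $X'$, this strict inequality can hold only if there is a project $q_0\in Q$ with $i^*\in N_{q_0}(X')$ but $i^*\notin N_{q_0}(X)$ --- otherwise the $X'$-payments of $i^*$ on $Q$ would be dominated term by term by her $X$-payments. In particular $i^*$ approves $q_0$ and $|N_{q_0}(X)|\le n-1$. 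I then claim that $q_0$, together with $t''=|N_{q_0}(X)|+1$, witnesses the instability of $(W,X)$: the set $N_{q_0}(X)\cup\{i^*\}$ has exactly $t''>|N_{q_0}(X)|$ members, all approve $q_0$, and each has capacity at least $\cost(q_0)/t''$ in $X$. Indeed, a voter already in $N_{q_0}(X)$ pays $\pi(q_0,X)=\cost(q_0)/|N_{q_0}(X)|$ towards $q_0$, and $i^*$ pays $\alpha\ge\pi(q_0,X)$ towards $p$; in either case the minimal-gap property of $K$ (using $|N_{q_0}(X)|+1\le n$) bounds the corresponding capacity below by $\cost(q_0)/(|N_{q_0}(X)|+1)$. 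This contradicts the stability of $(W,X)$ and finishes the proof.

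I expect the main obstacle to be conceptual rather than computational. The obvious attempt --- using $p$ and $t$ to violate the stability of $(W',X')$ directly --- is blocked precisely by the saturated voter $i^*$. The key idea is instead to track where $i^*$'s ``wasted'' contribution $\alpha$ to $p$ in $X$ re-emerges in $X'$, namely as extra support for some cheap project $q_0\in Q$, and to convert that discrepancy into a violation of the stability of $(W,X)$. Once this routing is found, the remaining work --- two applications of the constant $\epsilon$ to pass between per-voter prices and capacities --- is routine bookkeeping.
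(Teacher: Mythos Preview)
Your proof is correct, and it takes a genuinely different route from the paper's.

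The paper first establishes an auxiliary claim: for every $\ell<j$ the set of voters paying for the $\ell$-th project is \emph{identical} in $(W,X)$ and in $(W',X')$. It proves this by a first-counterexample argument (if some $\ell$ failed, the extra voter would witness instability of $(W',X')$). With that in hand, the paper then assumes $\pi_j(X)<\pi_j(X')$, splits into the cases $p=q$ and $p\neq q$, and in each case shows directly that all voters of $V_j=N_p(X)$ have capacity at least $\alpha$ in $X'$ (using, crucially, that their spending on $Q$ is the same in both solutions), contradicting the stability of $(W',X')$.

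You instead start from the failure of that last implication: stability of $(W',X')$ gives you a saturated voter $i^*\in N_p(X)$ with $\kappa_{i^*}(X')<\alpha$. You then compare $i^*$'s total spending on $Q$ in the two solutions via the budget constraint, infer that $i^*$ must pay for some $q_0\in Q$ in $X'$ but not in $X$, and use $q_0$ together with $t''=|N_{q_0}(X)|+1$ to contradict the stability of $(W,X)$. This bypasses the global ``identical supporter sets on $Q$'' lemma and the $p=q$ versus $p\neq q$ case split entirely.

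What each approach buys: the paper's route isolates the auxiliary fact $V_\ell=V'_\ell$, which it later reuses in the proof of Theorem~\ref{thm:ames-consistency}. Your route is more self-contained and slightly shorter for this proposition alone, since it only needs to locate a single project $q_0$ where the supporter sets differ rather than prove they agree everywhere on $Q$.
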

\noindent This shows that the only reason for inconsistency between the outcomes $(W,X)$ and $(W',X')$ can come from ties. 
Specifically, the following scenario may lead to inconsistency:
Suppose we have two voters $1$ and $2$, projects $p_1,p_2,p_3$ where $v_1$ approves $p_1$ and $p_2$ and $v_2$ approves $p_2$ and $p_3$.
The tie-breaking rule for projects is $p_1<p_2<p_3$ and their costs are $cost(p_1)=1$, $cost(p_2)=4$, $cost(p_3)=2$ with a total budget of  $b=5$.
AMES selects outcome $\{p_1,p_2\}$ which are respectively fully paid by their unique supporters. If we increase the budget to $6$ and then initialize AMES on $\{p_1,p_2\}$ with the corresponding load distribution, then $p_2$ would not be selected because the capacity of $v_2$ is strictly less than $2$. The outcome remains $\{p_1,p_2\}$. If, however, we run AMES from scratch with budget $b=6$, the selected outcome is $\{p_1,p_2\}$ with $p_1$ fully paid by voter $1$ and $p_2$ paid equally by both voters.

\smallskip
A simple resolution to this inconsistency is to allow greedy update steps that add a project and (if necessary) remove projects with higher per-voter price \textit{or} 
same per-voter price and lower tie-breaking priority.
Specifically, we introduce a more refined update step,
where the capacities of voters depend on the project under consideration. 
Given a tie-breaking order $>$ on $P$,
for each voter $i\in N$, $p\in P$, and solution $(W,X)$, 
we define $i$'s {\em project-dependent capacity} 
$$
\kappa_{i,p}(X)=\max\{\kappa_i, \max\{x_{i,p'}: p>p',p'\in W\}\}.
$$
We will say that $(W,X)$ is \textit{lexicographically stable} 
if there is no project $p\in P$ and a positive integer $t>N_p(X)$ such that 
$|\{i\in N: p\in A_i, \kappa_{i,p}(X)\ge \frac{cost(p)}{t}\}|\ge t$.
The {\em tie-consistent AMES} proceeds in the same way as AMES, 
but uses lexicographic stability instead of stability. 

Project-dependent capacities may raise the concern that capacities now cannot be computed and sorted for all projects in one sweep; that instead the $O(n\log n)$ sorting cost is incurred for each project, leading to an increase in run-time to $O(m^2n\log n)$. In the proof of the following theorem we provide an implementation of tie-consistent AMES that has run-time $O(mn\log n+m^2n)$.
 
\begin{restatable}{theorem}{amesconsistency}\label{thm:ames-consistency}
Let $(W',X')$ be the outcome of AMES with budget $b'<b$. Let $(W'',X'')$ be the outcome of tie-consistent AMES initialized on $(W',X')$ with budget $b$. Let $(W,X)$ be the outcome of AMES with budget $b$. 
Then $W=W''$ and $X=X''$. Furthermore, each update step of tie-consistent AMES can be completed in time $O(n\log n+nm)$. 
\end{restatable}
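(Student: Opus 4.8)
The plan is to prove the consistency claim ($W = W''$ and $X = X''$) by an inductive/exchange argument on the sequence of greedy update steps, comparing the run of tie-consistent AMES from scratch with budget $b$ against the run initialized on $(W',X')$. First I would set up the right invariant: since $(W',X')$ is itself lexicographically stable for budget $b'$, and all per-voter prices only weakly decrease as the budget grows (this is the content of Proposition~\ref{prop:diff-due-ties} together with Lemma~\ref{lem:dec-appeal}), I claim that at every stage the price-per-voter vector $\boldsymbol{\pi}$ produced by the adaptive run coincides with a \emph{prefix} of the vector produced by the from-scratch run. Concretely, I would show that the from-scratch run on budget $b$ passes through a solution whose price vector agrees with $\boldsymbol{\pi}(X')$ on its first $|W'|$ entries — intuitively, running from scratch ``recreates'' the work already encoded in $(W',X')$ before doing anything new — and that from that common point onward the two runs are forced to make identical greedy choices. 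The key lemma here is that with project-dependent capacities $\kappa_{i,p}$, the greedy update step is fully determined: among all candidate $(p,t)$ pairs, the minimizer of $\cost(p)/t$ with ties broken by the fixed order on $P$ is unique, so there is no branching, and hence the output of tie-consistent AMES depends only on the set $\calWX^=(E)$ reachable from the start, not on the particular starting point — provided the start is dominated (in the $\rhd$ sense, prefix-wise) by the eventual output.

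The heart of the matter is why project-dependent capacities eliminate the inconsistency exhibited in the worked example. The point is that $\kappa_{i,p}(X) = \max\{\kappa_i, \max\{x_{i,p'} : p > p', p' \in W\}\}$ allows voter $i$ to ``re-pay'' for a project $p$ by withdrawing from a project $p'$ with the \emph{same} per-voter price but lower priority, not just from strictly more expensive ones. So in the adaptive run, if in the from-scratch run a voter $i$ would have joined the payers of some $p$ by splitting its load (lowering its per-voter price to a value equal to something $i$ already pays), the adaptive run — which might have $i$ currently paying exactly that amount for a lower-priority project $p'$ — still recognizes the instability and performs the corresponding update. I would formalize this by showing: whenever $(W,X) \rhd (W'', X'')$ fails to hold with equality but the two are both lexicographically stable and $W'' \neq W$ or $X'' \neq X$, Proposition~\ref{prop:diff-due-ties} forces the first point of difference to be a pure tie (equal price, different project), and lexicographic stability at that project then forces the tie to be resolved the same way in both runs — contradiction. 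This is essentially ``Proposition~\ref{prop:diff-due-ties} with ties now also broken consistently,'' and I expect it to be the main obstacle: one has to carefully track that the extra term in $\kappa_{i,p}$ does not create \emph{new} instabilities that would spoil the EJR-type guarantees or the convergence bound, while still being strong enough to kill the tie-induced discrepancy.

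For the running-time claim, the concern flagged in the text is that project-dependent capacities seem to require re-sorting capacities for each of the $m$ projects, giving $O(m^2 n \log n)$. The plan is to observe that $\kappa_{i,p}$ differs from $\kappa_i$ only through the term $\max\{x_{i,p'} : p > p', p' \in W\}$, which is a running maximum over $W$ in priority order; so I would process the projects of $W$ in priority order once, maintaining for each voter $i$ the prefix-maximum of $x_{i,\cdot}$, which costs $O(nm)$ total per update step. With the $\kappa_i$ values sorted once at cost $O(n \log n)$, and the per-project adjustments handled by the $O(nm)$ sweep, each update step's instability check and the selection of the greedy $(p,t)$ pair can be done in $O(nm)$ after the single $O(n\log n)$ sort — yielding $O(n\log n + nm)$ per step, matching the bound in Theorem~\ref{prop:lsoptmes-runtime}. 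The total $O(mn\log n + m^2 n)$ then follows from Proposition~\ref{prop:greedy-k}'s bound of $O(m)$ update steps. The only subtlety to check here is that, for the \emph{selection} of which $t$ to use for each project, we need $|\{i : p \in A(i), \kappa_{i,p}(X) \ge \cost(p)/t\}|$ as a function of $t$, which is a step function computable from the sorted list of the at-most-$n$ relevant $\kappa_{i,p}$ values; sorting these afresh per project would be too slow, but since they are each either $\kappa_i$ or an $x_{i,p'}$ value, and there are only $O(n)$ of them, a single $O(n)$-time merge against the globally pre-sorted $\kappa$-list plus the pre-sorted list of contributions of each voter suffices — I would spell this out in the full version.
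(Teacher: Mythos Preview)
Your second paragraph captures the paper's actual argument: both $(W,X)$ and $(W'',X'')$ are lexicographically stable, Proposition~\ref{prop:diff-due-ties} forces the first point of difference in their price vectors to be a tie, and lexicographic stability then rules out that tie---contradiction. That is the whole consistency proof. Your first paragraph's plan (an inductive comparison of the two runs, arguing the from-scratch execution ``passes through'' a prefix matching $(W',X')$) is unnecessary and in fact not how the paper proceeds: once you know lexicographically stable solutions are unique, you never need to look inside the runs. What you do need to note explicitly (and the paper does) is that AMES run from scratch with budget $b$, respecting the tie-breaking order, outputs a \emph{lexicographically} stable solution, not merely a stable one; without this, the uniqueness argument does not apply to $(W,X)$.

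For the running time, your plan has a gap. You recognize that re-sorting $(\kappa_{i,p})_{i\in N}$ per project would be too slow and propose an $O(n)$ merge, but your justification (``each is either $\kappa_i$ or some $x_{i,p'}$'') does not by itself give an $O(n)$ merge: in general, merging the global sorted $\kappa$-list against $n$ per-voter contribution lists is not linear. The paper's key observation, which you are missing, is much sharper: for every $i$ and $p$, the project-dependent capacity satisfies $\kappa_{i,p}(X)\in\{\kappa_i(X),\,\kappa_i(X)+\epsilon\}$. Indeed, if $\max\{x_{i,p'}:p>p'\}$ exceeds $\kappa_i$, then since $\kappa_i\ge z_i-\epsilon$ and all contributions lie in $K$, that maximum must equal $z_i=\kappa_i+\epsilon$. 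Hence for each fixed $p$, the array $(\kappa_{i,p})_i$ is the already-sorted array $(\kappa_i)_i$ with some entries bumped by the same constant $\epsilon$; it is an interleaving of two sorted subarrays and can be merged in $O(n)$. Your prefix-maximum machinery is not needed.
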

\subsection{Skipping Budgets}\label{sec:skippingbudgets}
Another potential for gain in efficiency of MES comes directly from our notion of stability. Fix an election $E = (N,P,(A(i))_{i\in N}, b, \cost)$, and write 
$E(b')$ to denote the election obtained from $E$ by changing the budget to $b'$;
thus, $E=E(b)$.
Consider a stable solution $(W,X)$ for $E(b)$.
It may be the case that $(W,X)$ remains stable for $E(b')$, where $b'>b$.
Now, consider a budget $b''$ with $b<b''<b'$. The following monotonicity 
property is easy to verify.

\begin{proposition}
If $(W,X)$ is unstable for $E(b'')$, then it is also unstable for $E(b')$. 
\end{proposition}
It follows that if $(W,X)$ is stable for $E(b')$, then we do not have to check intermediate budgets $b''$ with $b<b''<b'$. It is then natural to ask 
(1) what is the minimum value $b'>b$ such that $(W,X)$ is unstable for $E(b')$, and 
(2) can we compute this value efficiently?
It turns out that for AMES $b'$ is easy to compute; we remark that, in contrast, for MES
this is not known to be the case (this is because the solutions output by MES are not necessarily equal-share solutions, which makes MES rather fragile with respect to budget modifications).
\begin{restatable}{theorem}{thmunstable}\label{thm:unstable}
    The minimum $b'>b$ such that $(W,X)$ is unstable for $E(b')$ can be computed in time $O(mn^2\log n)$.
\end{restatable}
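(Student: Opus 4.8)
The plan is to observe that, as the budget is raised from $b$ to $b'\ge b$, the solution $(W,X)$ itself does not change and remains an equal‑shares solution for $E(b')$: equations (1)--(3) do not mention the budget, and priceability is preserved since $X_i\le\frac{b}{n}\le\frac{b'}{n}$ for every~$i$. Moreover $z_i=\max_{p\in W}x_{i,p}$, $X_i=\sum_{p}x_{i,p}$ and the constant $\epsilon$ all depend only on $E$, not on the budget. Hence, by \eqref{eq:kappa}, the capacity of voter $i$ at budget $b'$ is the function
\[
\kappa_i(b')=\max\Bigl\{z_i-\epsilon,\ \tfrac{b'}{n}-X_i\Bigr\},
\]
which is continuous, piecewise linear and non‑decreasing in $b'$. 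Consequently, for every project $p$ and every threshold $c$ the set $\{i:p\in A(i),\ \kappa_i(b')\ge c\}$ can only grow as $b'$ grows, so the instability condition of Definition~\ref{def:stable} is monotone in $b'$: the set of budgets at which $(W,X)$ is unstable for $E(b')$ is an up‑set. It therefore suffices to compute, for each potential witness pair $(p,t)$ with $t>|N_p(X)|$, the least budget $b'_{p,t}$ at which $(p,t)$ witnesses instability, and then output $\min_{p,t} b'_{p,t}$.

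Fix such a pair and put $c=\frac{\cost(p)}{t}$. A voter $i$ with $p\in A(i)$ satisfies $\kappa_i(b')\ge c$ either because $z_i-\epsilon\ge c$, a condition independent of $b'$, or because $\frac{b'}{n}-X_i\ge c$, i.e. $b'\ge n(c+X_i)$. Let $S_0=\{i:p\in A(i),\ z_i-\epsilon\ge c\}$, and for each voter $i$ with $p\in A(i)$ and $z_i-\epsilon<c$ let $\beta_i=n(c+X_i)$ be the budget at which it starts to count. Then $(p,t)$ witnesses instability at $b'$ iff $|S_0|+|\{i\notin S_0:p\in A(i),\ \beta_i\le b'\}|\ge t$, so $b'_{p,t}$ is the $(t-|S_0|)$-th smallest element of the multiset $\{\beta_i:i\notin S_0,\ p\in A(i)\}$ (and $b'_{p,t}=+\infty$ if this multiset has fewer than $t-|S_0|$ elements). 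Since $(W,X)$ is stable for $E(b)$ we have $|\{i:p\in A(i),\ \kappa_i(b)\ge c\}|<t$, and this set is the disjoint union of $S_0$ and $\{i\notin S_0:p\in A(i),\ \beta_i\le b\}$; hence at most $t-1-|S_0|$ of the relevant $\beta_i$ are $\le b$, so $b'_{p,t}>b$ whenever it is finite. By the monotonicity above, $\min_{p,t} b'_{p,t}$ is then exactly the minimum budget $b'>b$ at which $(W,X)$ becomes unstable (and is $+\infty$ precisely when no such $b'$ exists).

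For the running time, a single pass over $X$ yields all $X_i$ and $z_i$ in time $O(mn)$, and $\epsilon$ depends only on $E$ and can be treated as precomputed. For each of the $m$ projects there are at most $n$ relevant values of $t$ (those with $|N_p(X)|<t\le|\{i:p\in A(i)\}|$); for each pair $(p,t)$ we scan the voters approving $p$ to form $S_0$ and the values $\beta_i$ and then extract the $(t-|S_0|)$-th order statistic, in $O(n\log n)$ time (or $O(n)$ with linear‑time selection). This is $O(n^2\log n)$ per project and $O(mn^2\log n)$ overall, matching the claimed bound.

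I expect the only delicate point to be the bookkeeping around the budget‑independent voters $S_0$, together with the two sanity checks that the minimum over all $(p,t)$ is actually attained and strictly exceeds $b$; both follow directly from the monotonicity of $\kappa_i(\cdot)$ and from the hypothesis that $(W,X)$ is already stable at the starting budget $b$. Everything else is a mechanical translation of Definition~\ref{def:stable} into per‑$(p,t)$ threshold computations.
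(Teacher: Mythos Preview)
Your proposal is correct and follows essentially the same approach as the paper: enumerate pairs $(p,t)$, for each voter approving $p$ compute the budget threshold at which her capacity reaches $\cost(p)/t$, and take the appropriate order statistic. The paper's proof is terser (it bundles your $S_0$ voters and the $\beta_i$ thresholds into a single per-voter increment $y_{i,t}$, sorts, and takes the $t$-th smallest), while you are more explicit about separating the budget-independent voters and about verifying monotonicity and $b'_{p,t}>b$; but the algorithm and the $O(mn^2\log n)$ accounting are the same.
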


\section{Experimental Evaluation}\label{sec:exp}
We evaluate the potential gain in efficiency of AMES for the completion problem on real-world participatory budgeting data from three Polish cities. We consider PB data from Pabulib~\cite{stolicki2020pabulib}, an open participatory budgeting library. We select several data sets with a large\footnote{That is, in the range of $40$--$150$, as opposed to $5$--$10$, as in the PB elections in, e.g., Zabrze.} number of proposed projects and mean ballot size of more than one from three Polish cities: Warsaw, Wroclaw and Lodz. Warsaw holds district-based elections, so we consider three districts of Warsaw in our experiments, while the data from Wroclaw and Lodz comes from city-wide elections. The following table summarizes the properties of these data sets.

\begin{figure}[h!]
\begin{tabular}{ |p{3.8cm}||p{0.8cm}|p{1.5cm}|p{1.3cm}|p{1.5cm}|p{2.5cm}|  }
 \hline
 City/District&year &vote count &projects& budget & mean vote length\\
 \hline
 Wroclaw/city-wide & 2018      &53,801&   39 & 4,000,000 &1.87643\\
 Lodz/city-wide & 2020   & 51,472   &151& 5,715,627 & 3.82408\\
 Warszawa/Ursynow &2023 & 6260&  54& 	6,067,849&11.586\\
 Warszawa/Bielany&  2023  & 4,956&  98& 5,258,802&11.3999\\
 Warszawa/Praga-Polodnie &  2023 &   8,922 & 81& 7,180,288 &11.5092\\

 \hline
\end{tabular}
\caption{Participatory budgeting data from Poland}
\end{figure}

\begin{figure*}[h!]
\centering
\begin{subfigure}{0.49\textwidth}
  \includegraphics[width=\linewidth]{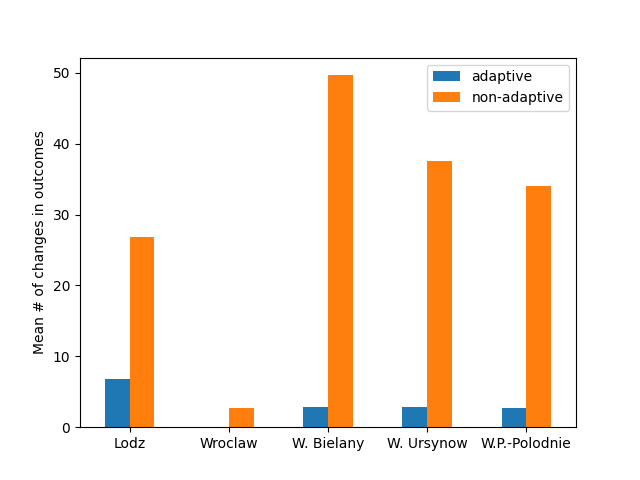}
  \label{fig:comps1}
\end{subfigure}
\hfill
\begin{subfigure}{0.49\textwidth}
  \includegraphics[width=\linewidth]{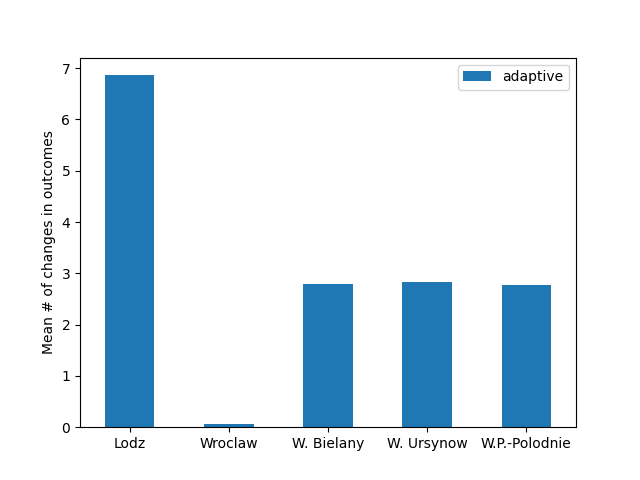}
  \label{fig:comps2}
\end{subfigure}

\caption{We repeatedly top up each voter's budget up by 1 and evaluate the average change in the outcome over $50$ runs. }\label{fig:data}
\end{figure*}
We compare the average difference in outcomes when the budget is increased (i.e., each voter gets more money, as is done in the top-up solution for the completion problem for MES). More formally, for each $i\in [50]$, let
$(W^i, X^i)$ denote the output of AMES with budget of $\frac{b}{n}+i$
per voter. 
We compare solutions at step $i-1$ and step $i$ by calculating the change in outcome as measured by 
$$
|\{p\in P: \pi(p,X^i)\neq \pi(p,X^{i-1})\}|
$$ 
(that is, how many projects have lower per-voter price plus how many were removed), and we average our results over these $50$ runs to obtain a final figure for each data set for the adaptive method. If we run AMES from scratch for each (voter) budget $\frac{b}{n}+ i$, then the number of projects added will be exactly $|W^i|$. Therefore, we take the average number of selected projects over the $50$ runs as the baseline comparison when using AMES non-adaptively.
The results of these experiments are shown in Figure~\ref{fig:data}. Consider Warszawa Bielany; in every run we add $50$ projects on average, while the average difference between consecutive outcomes (and hence the amount of work required) is just below $3$. Similarly, in Lodz, every budget-increasing iteration on average adds over $25$ projects, while the average differences amount to less than $7$.  These results confirm that consecutive outcomes are very similar on average, and hence our budget-adaptive approach offers significant savings over repeatedly executing MES from scratch repeatedly, as measured by the number of iterations that modify the outcome.

\section{Conclusion}
We introduced a new voting rule for PB with approval utilities. 
This rule, which we call Adaptive Method of Equal Shares (AMES), 
can leverage similarities between an equal-shares initialization $(W,X)$ for budget $b'$ and output solution $(W^*,X^*)$ for budget $b''\geq b'$ when run from scratch, allowing it to find $(W^*,X^*)$ in fewer iterations. This feature is relevant for the completion problem of MES-style methods, as it alleviates the need to run the method repeatedly from scratch for different budgets. The solutions output by AMES satisfy the proportionality axiom EJR, and this can be verified efficiently in time $O(n\log n+mn)$.

Anecdotal evidence suggests that in practice there is a dislike of voting rules that are difficult to explain. One may worry that AMES suffers from this drawback to a greater extent than its non-adaptive cousin MES. However, we emphasize that it suffices to explain the simpler rule, which runs from scratch with a fixed budget $b'\ge b$: the adaptive version is only used for finding $b'$ efficiently, by leveraging the similarity in consecutive outcomes. Thus, AMES is essentially as transparent as the currently used completion method for MES: while the choice of the virtual budget $b'$ may appear ``magical'', it can be justified by showing that a further virtual budget increase would result in a proposal that is not feasible with respect to $b$.}

To evaluate the potential of our budget-adaptive approach, we analyzed real-world PB data, and showed that solutions of AMES obtained by repeatedly topping up budgets 
differ very little on average. More broadly, AMES can also be useful when the input election is changed in other ways, e.g., the costs of some projects are updated, or a few voters change their preferences; in such cases, too, checking for updates may be faster than re-computing solutions from scratch.

\smallskip
Our experiments were conducted in Python. To analyze the gain in efficiency by using AMES budget-adaptively in practice, it would be useful to implement fully optimized versions of both adaptive and non-adaptive completion methods in C. This will allow us to compare the constant overhead of an adaptive method with its gain in efficiency from leveraging similarities. Improving upon the run-time of $O(mn^2\log n)$ to skip budgets (Theorem \ref{thm:unstable}) by removing the quadratic dependency on $n$ is an important direction for future work.
\bibliographystyle{abbrv}
\bibliography{bibliography}


\appendix

\newpage
\section{Appendix A}

\subsection{Omitted proofs from Section~\ref{sec:stab+ejr}}

\lemrestrict*
\begin{proof}
Fix an election $E = (N,P,(A(i))_{i\in N}, b, \cost)$, a set of voters $V\subseteq N$, a set of projects $T\subseteq P$, an election $E'$, solutions $(W, X)\in\calWX^*(E)$
and $(W', X')\in\calWX^*(E')$, and 
a voter $i\in V$ as in the statement of the lemma, and let $r = |\{p: x'_{i, p}>0\}|$.
Note that $|A(i)\cap W'|\ge r$.
We will show that there exists a voter $j\in V$ such that voters $i$ and $j$ jointly approve at least $r$ projects in $W$.

To this end, consider the restriction of the per-voter price vector $\boldsymbol{\pi}(X')$  to the set of projects $\{p: x'_{i, p}>0\} =\{p_1,p_2,\ldots, p_r\}$, i.e., the vector
\begin{align}
\boldsymbol{\pi}'=\left(\frac{cost(p_1)}{|V_1|},\ldots,\frac{cost(p_r)}{|V_r|}\right),
\end{align}
where  for each $q\in [r]$ the set $V_q\subseteq V$ consists of voters in $V$ 
who pay for $p_q$ in $(W', X')$,  
and we assume that 
$\frac{cost(p_1)}{|V_1|}\leq \frac{cost(p_2)}{|V_2|}\leq \ldots \leq \frac{cost(p_r)}{|V_r|}$.
Let $\ell$ be the index of the first project $p_{\ell}$ in $\boldsymbol{\pi}'$ such that some voter in $V_{\ell}$ does not pay for $p_\ell$ in $(W,X)$.
Let $V'$ be the set of voters paying for $p_{\ell}$ in $(W,X)$; 
note that $V'$ may be empty if $p_{\ell}\notin W$. 

Observe that $V_{\ell}\setminus V'\neq \varnothing$ and so 
 $|V_{\ell}\cup V'| > |V'|$; also, we have $|V_{\ell}\cup V'|\ge |V_\ell|$. 
 Hence, if every $j\in V_{\ell}\setminus V'$ had capacity $\kappa_j(X)\geq  \frac{\cost(p_\ell)}{|V_{\ell}|}$ in $(W, X)$, 
 the voters in $V_\ell\cup V'$
 could share the cost of $p_\ell$ by paying 
 $\frac{\cost(p_\ell)}{|V_\ell \cup V'|}$ each;
 as 
 $\frac{\cost(p_\ell)}{|V_\ell \cup V'|}< \frac{\cost(p_\ell)}{|V'|}$, 
 this would contradict the stability of $(W,X)$.

 It follows that there exists a voter $j\in V_{\ell}$ with capacity $\kappa_j(X)<\frac{\cost(p_{\ell})}{|V_{\ell}|}$ who approves $p_{\ell}$, 
 but does not pay for $p_{\ell}$ in $(W,X)$.
	This has two important implications:
\begin{itemize}
\item[(a)] Voter $j$ contributes strictly more than $\frac{b}{n}-\frac{\cost(p_{\ell})}{|V_{\ell}|}$ towards $W$.
\item[(b)] Voter $j$ contributes at most $\frac{\cost(p_{\ell})}{|V_{\ell}|}$ towards each project in $W$.
\end{itemize}

Let $P' =\{p_1, \dots, p_{\ell-1}\}$, and 
note that by our choice of $\ell$ we have $P'\subseteq W$, 
and also $P' \subseteq A(i)$.
Suppose that in $(W, X)$ voter $j$ contributes to $y$ projects in $W\setminus P'$. 
To complete the proof, it suffices to show that $y\ge r-|P'|= r-\ell+1$.

Consider voter $i$. Her budget is $\frac{b}{n}$, and in $(W', X')$
she contributes $\frac{\cost(p_q)}{|V_q|}$ to project 
$p_q$, for $q=1, \dots, r$. Therefore, 
\begin{equation}\label{eq:voter-i}
    \frac{b}{n}-\sum_{q=1}^{\ell-1}\frac{\cost(p_q)}{|V_q|}\geq\sum_{q=\ell}^{r}\frac{\cost(p_q)}{|V_q|}.
\end{equation}
Now, consider voter $j$ and solution $(W, X)$.
By our choice of $\ell$ the cost of each project $p_q\in P'$
is shared by at least $|V_q|$ voters in $(W, X)$. Hence, in $(W, X)$ voter $j$
spends at most $\sum_{q=1}^{\ell-1}\frac{\cost(p_q)}{|V_q|}$ on projects in
$P'$. Therefore, by observation~(a) she spends strictly more than
$$
\frac{b}{n} - \frac{\cost(p_{\ell})}{|V_{\ell}|} - \sum_{q=1}^{\ell-1}\frac{\cost(p_q)}{|V_q|}
$$
on projects in $W\setminus P'$; by~\eqref{eq:voter-i}, this quantity is at least 
$$
\sum_{q=\ell+1}^{r}\frac{\cost(p_q)}{|V_q|}.
$$
On the other hand, in $(W, X)$, voter $j$ contributes to $y$ projects in $W\setminus P'$, 
and, by observation~(b), she spends at most $\frac{\cost(p_{\ell})}{|V_{\ell}|}$
on each of them. Hence, we obtain
$$
y\cdot \frac{\cost(p_{\ell})}{|V_{\ell}|}> 
\sum_{q=\ell+1}^{r}\frac{\cost(p_q)}{|V_q|}
$$
and hence
$$
y> \sum_{q=\ell+1}^{r}\frac{\cost(p_q)}{|V_q|} \times\frac{|V_{\ell}|}{cost(p_{\ell})}
\ge \sum_{q=\ell+1}^{r} 1=r-\ell,
$$
where we use the fact that $\frac{\cost(p_q)}{|V_q|}\geq \frac{\cost(p_{\ell})}{|V_{\ell}|}$ for $q\geq \ell$.
Since $y$ is an integer, we have $y\ge r-\ell+1$, which is what we set out to prove.
Hence, the proof is complete.
\qed\end{proof}

\propverifycomplexity*
\begin{proof}

Let $(W, X)$ be an equal-shares solution. The capacity $\kappa_i$ of each $i\in N$ can be computed in time $O(|W|)=O(m)$. We can then sort the voters by their capacity (in non-increasing order) in time $O(n\log n)$. Then, for each $p\in P$, let $\alpha(p)$ be the maximum value of $t$ 
such that the capacity of each of the first $t$ voters in the list who approve $p$ is at least $\frac{\cost(p)}{t}$; we set $\alpha(p)=-\infty$ if 
there is no positive value of $t$ with this property. The quantity $\frac{\cost(p)}{\alpha(p)}$, which can be computed in time $O(n)$ by scanning through the list, is the minimum per-voter price that can be attained for $p$
(either by adding it to the outcome or by increasing the number of voters who contribute towards it).
Observe that the solution $(W,X)$ is stable if and only if 
for each $p\in W$ we have $\frac{\cost(p)}{\alpha(p)}\geq \pi(p, X)$
and for each $p\not\in W$ we have $\alpha(p)=-\infty$.
Hence, this process is indeed a verification process for stability, and, by the above analysis, has a running time of $O(n\log n+nm)$.
\qed\end{proof}
\subsection{Omitted proofs from Section~\ref{sec:AMES}}
\propupdate*
\begin{proof}
Suppose that $W'$ is obtained from $W$
by lowering the per-voter price of project $p$ and removing a set of projects $W^-$, so that in $(W', X')$ the load of $p$ is shared by a set of voters $V$ of size $|V|=s$.
By construction, $X'$ is an equal-shares load distribution for $W'$. To see that it is feasible, 
consider a voter $i\in N$. There are four cases.
\begin{itemize}
\item[{\bf Case 1: }] $i\not\in V$. 

Then $x'_{i, p}=0$ and hence
$X'_i\le X_i\le \frac{b}{n}$
(as $(W, X)$ is a feasible solution). 

\item[{\bf Case 2: }]
$i\in V$ and 
$\frac{\cost(p)}{s} \le \frac{b}{n} - X_i$.

Then
$X'_i\le \frac{\cost(p)}{s} + X_i\le \frac{b}{n}$, so this voter stays within the budget as well.

\item[{\bf Case 3: }]
$i\in V$, $\frac{\cost(p)}{s} > \frac{b}{n} - X_i$, 
and $x_{i, p}=0$. 

Note that $i\in V$ implies $\frac{\cost(p)}{s}\le \kappa_i$. 
Moreover, since $\frac{\cost(p)}{s} > \frac{b}{n} - X_i$ , the set $W^-$ contains a project
$p'$ with 
$x_{i, p'}>0$,
$\pi(p', X)= \max\{\pi(p, X): x_{i, p}>0, p\in W\}$,
such that $\kappa_i= x_{i, p'}-\epsilon$
(recall that we define $\epsilon=\min_{x,y\in K, x\neq y}|x-y|$, 
where $K=\{\frac{cost(p)}{i}: p\in P, i\in N\}$).
Hence, 
$$
x_{i, p'} = \kappa_i+\epsilon \ge \frac{\cost(p)}{s}+\epsilon>\frac{\cost(p)}{s}.
$$
Therefore, $X'_i\le \frac{\cost(p)}{s} + \sum_{q\in W}x_{i, q} - x_{i, p'}< X_i\le \frac{b}{n}$:
the increase in the total load of $i$ caused
by making $i$ contribute towards $p$ is counteracted by a decrease in her total load from removing $p'$. 
\item[{\bf Case 4: }]
$i\in V$, $\frac{\cost(p)}{s} > \frac{b}{n} - X_i$, 
and $x_{i, p}>0$. 

Note that $x_{i, p}>0$ implies that $p\in W$.
Since the per-voter price of $p$ goes down when moving from $W$ to $W'$,
we have $x_{i,p}>x'_{i, p} = \frac{cost(p)}{s}$.
Hence, 
$$
X'_i\le \frac{\cost(p)}{s} + \sum_{q\in W}x_{i, q} - x_{i, p}< X_i\le \frac{b}{n}.
$$
\end{itemize}

We conclude that $(W', X')\in \calWX^=(E)$.

It remains to argue that $(W', X')\rhd (W, X)$.
To show this, 
we will prove that $\pi(p', X)> \pi(p, X')$ for each $p'\in W^-$, i.e., the per-voter price of each project in $W^-$ with respect to $X$ is strictly higher than the per-voter price of $p$ with respect to $X'$.
Indeed, fix $p'\in W^-$. Project $p'$ was placed in $W^-$ because 
$\kappa_i= x_{i,p'}-\epsilon$ for some $i\in V$, 
and therefore
$\pi(p', X) =  x_{i, p'}=\kappa_i+\epsilon > \kappa_i$. On the other hand, 
$\pi(p, X')=\frac{\cost(p)}{s}\le \kappa_i$ since $i\in V$.
We conclude that $\pi(p, X') < \pi(p', X)$. As this holds for every $p'\in W^-$ and, moreover, $\pi(p,X')<\pi(p,X)$, we have $(W', X')\rhd (W, X)$.
\qed\end{proof}
\lemdecappeal*
\begin{proof}
Let $V'$ and $V''$ be the sets of voters who pay 
for $p'$ and $p''$ in $X'$ and $X''$, respectively:
$$
V'=\{i\in N: x'_{i, p'} > 0\}, \qquad
V''=\{i\in N: x''_{i, p''} > 0\}.
$$
Suppose for the sake of contradiction that 
$\pi(p', X') = \frac{\cost(p')}{|V'|} > \frac{\cost(p'')}{|V''|} = \pi(p'', X'')$.
We will argue that 
$\kappa_i(X)\ge \frac{\cost(p'')}{|V''|}$ for each 
$i\in V''$; as all voters in $V''$ approve $p''$, 
this means that, given $(W, X)$, 
we could have added $p''$ with per-voter price $\frac{\cost(p'')}{|V''|}$ instead of adding $p'$ with per-voter price $\frac{\cost(p')}{|V'|}$, a contradiction with
$(W, X)\xrightarrow{p'} (W', X')$ being a greedy update step.

Indeed, consider a voter $i\in V''$. We have $\kappa_i(X')\ge \frac{\cost(p'')}{|V''|}$.
Now, if $\kappa_i(X)\ge \kappa_i(X')$, we are done. Otherwise, 
the capacity $\kappa_i$ of voter $i$ increased as we transitioned from
$(W, X)$ to $(W', X')$. 
This can happen in one of two ways:
\begin{itemize} 
\item[(1)] $\max_{p\in W} x_{i,p}<\max_{p\in W'} x'_{i,p}$ or 
\item[(2)] $\frac{b}{n}-X_i<\frac{b}{n}- X'_i$.
\end{itemize}
Case (1) can only happen if $i$ pays for project $p'$ in $X'$. But this implies that $i\in V'$,
so $\kappa_i(X)\geq \frac{\cost(p')}{|V'|}>\frac{\cost(p'')}{|V''|}$.
Case (2) can only happen
if $x_{i, p}>0$ for some project $p$ that was removed
from $W$ in order to accommodate $p'$. As we only remove projects with higher per-voter price, we have
$$
x_{i, p} = \pi(p, X)> \pi(p', X')=\frac{\cost(p')}{|V'|}
$$
and hence $x_{i,p}>\frac{\cost(p')}{|V'|}
\geq \frac{\cost(p'')}{|V''|}+\epsilon$, 
where the second inequality follows by our choice of $\epsilon$.
As $\kappa_i(X)\ge x_{i,p}-\epsilon$, we conclude that 
$\kappa_i(X)
>\frac{\cost(p'')}{|V''|}$ in this case too.    

In each case, we have $\kappa_i(X)\ge \frac{\cost(p'')}{|V''|}$ for all $i\in V''$, 
which means that lowering the per-voter price of $p''$ to 
$\frac{cost(p'')}{|V''|}$ was a feasible update step when processing $(W, X)$. Together with
$\frac{\cost(p')}{|V'|} >  \frac{\cost(p'')}{|V''|}$, this is a 
contradiction to $(W, X)\xrightarrow{p'} (W', X')$ being a greedy update step. 
\qed\end{proof}

\propgreedy*
\begin{proof}
We will prove that if a greedy update step  lowers the per-voter price of a project $p$ (by adding a project $p$ to the outcome or increasing the number of voters paying for it) then $p$'s per-voter price is not changed in subsequent steps. 
This implies that if $(W^*,X^*)$ is the result of $t$ greedy update steps starting from solution $(W,X)$, then  $t=|\{p\in P: \pi(p,X^*)<\pi(p,X)\}|$, proving our claim.

Indeed, consider a project $p$
whose per-voter price goes down during a greedy update step 
$(W,X)\rightarrow (W^1,X^1)$, and let $V$ be the voters who contribute to $p$ in $X^1$. Suppose for the sake of contradiction that $p$'s per-voter price changes during a subsequent update step. 
We consider the first such update step $(W^2,X^2)\rightarrow (W^3,X^3)$
(note that it may happen that $(W^1, X^1)=(W^2, X^2)$).
\begin{enumerate}
\item[\textbf{Case 1}]
$(W^2,X^2)\xrightarrow{p} (W^3,X^3)$. 
In this case, $p\in W^3$, so $p$ is not removed during this update step. This means that $p$'s per-voter price does not increase, as removal is the only means to increase a project's per-voter price. However, applying Lemma~\ref{lem:dec-appeal} repeatedly,
we conclude that $\pi(p,X^3)\geq \pi(p,X^1)$, so $p$'s per-voter price did not increase either.
\item[\textbf{Case 2}] 
$(W^2,X^2)\xrightarrow[p]{p'} (W^3,X^3)$. 
Suppose project $p$ was removed while another project $p'$ was added to the outcome, 
so $W^3\setminus W^2=\{p'\}$. By Lemma~\ref{lem:dec-appeal} the per-voter price of $p'$ in the resulting load distribution $X^3$ is at least as high as the per-voter price of $p$ at the time when it was originally added. By our choice of update step, $\pi(p,X^2)=\pi(p,X^1)$, 
so the per-voter price of $p$ has not changed since the update step $(W,X)\rightarrow (W^1,X^1)$. This means that, when adding $p'$, we removed a project whose per-voter price was no higher than that of $p'$, a contradiction with how we select the set $W^-$ during an update step.
\end{enumerate}  
\qed\end{proof}

\proplsoptmesruntime*
\begin{proof}
The algorithm proceeds by starting with an arbitrary solution and performing greedy update steps; by Proposition~\ref{prop:greedy-k}, it converges after 
$|\{p\in P: \pi(p,X^*)< \pi(p,X)\}|=O(|W^*|)=O(m)$ steps.

It remains to evaluate the time complexity of each step. 
Let $(W, X)$ be the solution at the start of a step.
For each voter $i\in N$ we can compute $r_i\in\argmax_{i\in W}\{x_{i,r}\}$ and capacity $\kappa_i$ in time $O(|W|)=O(m)$. We can then sort the voters by their capacity (in non-increasing order) in time $O(n\log n)$. As in the proof of 
Theorem~\ref{prop:verifycomplexity}, for each $p\in P$, let $\alpha(p)$ be the maximum $t$ such that the capacity of each of the first $t$ voters in the list who approve $p$ is at least $\frac{\cost(p)}{t}$; we set $\alpha(p)=-\infty$ 
if there is no positive value of $t$ with this property. 
Scanning though the list $(\alpha(p))_{p\in P}$, we terminate if $\alpha(p)=-\infty$ for all $p \in P\setminus W$ and $\frac{\cost(p)}{\alpha(p)}\geq \pi(p, X)$ for each $p\in W$. Otherwise,
we select a project $p\in P$ with the minimum value of $\frac{cost(p)}{\alpha(p)}$. Clearly, this can be done in time $O(m)$.
To accommodate project $p$, we may need to remove some set of projects $W^{-}\subset W$, which we compute as follows.
We set $W^{-}=\varnothing$. For each new contributor $i$ to $p$ (i.e. $x_{i,p}=0$), if $\frac{b}{n}-X_i<\frac{cost(p)}{\alpha(p)}$, we add $r_i$ to $W^{-}$: $W^{-}=W^{-}\cup \{r_i\}$.
We set $W=W\cup \{p\}\setminus W^{-}$ and update the load distribution $X$.
The running time of the selection process is thus $O(n\log n+nm)$, whereas the load distribution $X$ can then be updated in time $O(nm)$.
\qed\end{proof}


\subsection{Omitted proofs from Section~\ref{sec:ames-consistency}}
\propdiffdueties*
\begin{proof}
By our choice of the index $j$ there is a list of projects
$q_1, \dots, q_{j-1}$ such that for all $\ell<j$ we have
$\pi_\ell(X)=\pi(q_\ell, X)=\pi(q_\ell, X')=\pi_\ell(X')$.
Let $V_\ell$ be the set of voters who pay for $q_\ell$ in $(W,X)$, 
and  let $V_\ell'$ be the set of voters who pay for $q_\ell$ in $(W',X')$. 
We claim that $V_\ell=V_\ell'$ for each $\ell<j$.

Indeed, if this is not the case, let $\ell<j$ be the first index 
for which this claim does not hold, i.e., $V_{\ell}\neq V_{\ell}'$.
Note that $|V_\ell|=|V'_\ell|$; hence,  
there exists a voter $i$ that contributes towards 
$q_\ell$ in $(W,X)$, but not in $(W',X')$, i.e., $i\in V_\ell\setminus V'_\ell$. 
We will argue that that cost of $q_\ell$ can be shared equally by voters in 
$V'_\ell\cup\{i\}$, thereby contradicting the stability of $(W',X')$.

To see this, let $z = |V_\ell|=|V'_\ell|$, and
note that, by the choice of $\ell$, for each $\ell'<\ell$ voter $i$'s contribution to project $q_{\ell'}$ is the same in $(W,X)$ and in $(W',X')$.
This means, in particular, that upon $o$ contributing to $q_1,\ldots, q_{\ell-1}$, $i$ still has $\frac{\cost(q_{\ell})}{z}$ budget left over. Furthermore, the per-voter price of all other projects it pays for in $(W',X')$ is at least as high as that of $q_{\ell}$, and hence she pays at least $\frac{\cost(q_{\ell})}{z}$ for each. This implies that $\kappa_i(X')\geq \frac{\cost(q_{\ell})}{z}-\epsilon\geq
\frac{\cost(q_{\ell})}{z+1}$.
On the other hand, in $(W', X')$ every $i'\in V'_{\ell}$ contributes 
$\frac{\cost(q_{\ell})}{z}$ 
to project $q_{\ell}$, so
her capacity is $\kappa_{i'}(X')\geq \frac{\cost(q_{\ell})}{z}-\epsilon\geq \frac{\cost(q_{\ell})}{z+1}$. This implies that $(W', X')$ is unstable, 
as we can increase the number of voters who share the cost of $q_\ell$ to $z+1$, 
a contradiction. Thus, we conclude that $V_\ell=V_\ell'$ for all $\ell<j$.

Now	suppose $\pi_j\neq \pi'_j$; without loss of generality assume 
$\pi_j < \pi'_j$. We consider two cases.
	\begin{itemize}
	\item[\textbf{Case 1}] $p=q$: Since $\pi_j<\pi'_j$ and $p=q$, we have $|V_j|>|V'_j|$, i.e., 
	more voters pay for $p$ in $(W,X)$ than in $(W',X')$. 
	Thus, there exists a voter
	 $i\in V_j\setminus V'_j$. By the same argument as above, it follows that $\kappa_{i'}(X')\geq \frac{\cost(p)}{|V'_j|+1}$ for $i'\in V'_j\cup\{i\}$, thereby contradicting the stability of $(W',X')$.
	\item[\textbf{Case 2}] $p\neq q$: Since $\pi_j < \pi'_j$, this implies in particular that $\pi_j>0$ and $V_j\neq \varnothing$. 
	Each $i\in V_j$ contributes the same amount to projects $q_\ell$, $\ell <j$, in both $(W,X)$ and $(W', X')$. Each project $q_\ell, \ell\geq j$,
	that $i$ contributes to in $(W',X')$ has per-voter price of at least 
	$\pi'_j>\pi_j=\frac{\cost(p_j)}{|V_j|}$. Therefore after paying for projects $p'_\ell$, $\ell<j$, each such $i$ has at least $\frac{\cost(p_\ell)}{|V_\ell|}$ out of the $\frac{b}{n}$
	budget remaining. Moreover, for every $\ell>j$ her contribution to 
    $p'_\ell$ in $(W', X')$
	is at least $\frac{\cost(p'_j)}{|V'_j|}=
	\pi'_j < \pi_j=\frac{\cost(p_j)}{|V_j|}$. 
	Hence, for each $i\in V_j$ we have 
	$\kappa_i(X')\geq \max\{\frac{\cost(p_j)}{|V_j|},\frac{\cost(p'_j)}{|V'_j|}-\epsilon\}\geq \frac{\cost(p_j)}{|V_j|}$. This means that, given solution $(W', X')$, voters in $V_j$ can afford $p_j$.	This contradicts the stability of $(W',X')$.
	\end{itemize}
Hence, $\pi_j=\pi'_j$ and hence $p\neq q$, which is what we wanted to prove.
\qed\end{proof}

\amesconsistency*
\begin{proof}
First observe that by design, tie-consistent AMES outputs lexicographically stable solutions. 
Also observe that AMES run from scratch outputs a lexicographic solution, since it respects the tie-breaking rule.
So we will show that if $(W,X)$ and $(W',X')$ are two lexicographically stable solutions, then $W=W'$ and $X=X'$. 

As in Section~\ref{sec:ames-consistency}, suppose that $W\neq W'$ or $X\neq X'$.
Let $\boldsymbol{\pi}(X')$ and $\boldsymbol{\pi}(X)$ be the corresponding per-voter price vectors.
Let $j$ be the first index such that 
$\pi_j(X)=\pi(p_j,X)\neq \pi(p'_j,X') =\pi_j(X')$ or 
$\pi_j(X)=\pi(p_j,X) =  \pi(p'_j,X') =\pi_j(X')$, 
but $p_j\neq p'_j$.
Since lexicographically stable solutions are stable solutions, 
Proposition~\ref{prop:diff-due-ties} applies, showing that the per-voter prices of $p_j$ and $p'_j$ are equal: $\pi(p_j,X)= \pi(p'_j,X')$ and therefore $p_j\neq p_j'$.
Without loss of generality suppose $p_j>_{\text{lex}} p'_j$. Let $V$ be the set of voters who pay for $p_j$ in $(W,X)$. We claim that for every $i\in V$ we have $\kappa_{i,p_j}(X')\geq \frac{cost(p_j)}{|V|}$, contradicting the lexicographic stability of $(W',X')$. Since $i$ pays for $p_j$ in $(W,X)$ after having contributed 
the same amount in $(W',X')$ and $(W,X)$
to lower-price projects $p_1=p'_1,\ldots, p_{j-1}=p'_{j-1}$, 
she still has at least $\frac{cost(p_j)}{|V|}$ budget available. However, all projects $p'_\ell$, $\ell>j$, have $\pi(p'_\ell,X')> \pi(p'_j,X')$ or $\pi(p'_\ell,X')=\pi(p'_j,X')$ and $p'_j>_{\text{lex}} p'_\ell$.
So indeed, if $i$ does pay for any $p'_\ell$ with $\ell>j$, then 
$$
\kappa_{i,p_j}(X')\geq \pi(p'_j,X')\geq \pi(p'_j,X')=\frac{cost(p'_j)}{|V'|}=\frac{cost(p_j)}{|V|},
$$
and if not, then $i$'s available budget is at least $\frac{cost(p_j)}{|V|}$, so $\kappa_{i,p_j}(X')\geq\frac{cost(p_j)}{|V|}$. This gives us our desired contradiction. We conclude that such a $j$ does not exist,
and $\boldsymbol{\pi}(X')=\boldsymbol{\pi}(X)$, implying that $W=W'$ and $X=X'$, as desired.

Next we discuss how to calculate the project-dependent capacities in tie-consistent AMES so that an update step has run-time $O(n\log n+mn)$. Consider a solution $(W, X)$.
Just like in AMES, we calculate the capacities $\kappa_i(X)$ for all voters $i\in N$, and for each voter, we keep track of a project in $(W,X)$ with the highest per-voter price that she pays for (breaking ties lexicographically).
We sort $\{\kappa_i(X)\}_{i\in N}$ in time $O(n\log n)$. 

When considering a project $p\in P$ to add, for every supporter $i$ of $p$ we increase $i$'s capacity by $\epsilon$ if $\kappa_i(X)>\frac{b}{n}-X_i$ and the highest per-voter price (lowest priority) project $p'$ that $i$ contributes to has lower priority than $p$.
That is $\kappa_i(X)=X_{i,p'}-\epsilon$ and we set $\kappa_{i,p}(X)=X_{i,p'}$.
This may result in the list $\kappa_{i,p}(X)$ not being sorted in increasing order. However, note that each entry in the resulting capacity array was either left unchanged or increased by $\epsilon$. That is, the capacity array can be viewed as an interleaving
of two sorted subarrays: one for the unchanged entries and one for the updated entries. Hence, to obtain the sorted project-dependent capacities, we execute a simple merging algorithm that runs in time $O(n)$.
\qed\end{proof}

\subsection{Omitted proofs from Section \ref{sec:skippingbudgets}}

\thmunstable*
\begin{proof}
For each project $p\in P$ we will compute the minimum increase in budget needed for $p$ to get added to the output or, if it is already contained in the output, 
to increase the number of voters who pay for it. We then output the minimum of these values over all projects in $P$.

Fix a project $p$. Note that in an equal shares outcome a voter $i\in N$ approving project $p$ can only contribute values from the finite set 
$$
S=\left\{0,\cost(p),\frac{\cost(p)}{2},\ldots,\frac{\cost(p)}{|\{j\in N: p\in A(j)\}|}\right\}
$$
towards $p$; observe that $|S|\leq n+1$.
For every positive integer $t\le |\{j\in N: p\in A(j)\}|$ 
and for each voter $i$ who approves $p$, we compute the amount $y_{i, t}$ by which we would need to increase the budget of $i$ so that her capacity exceeds $\frac{\cost(p)}{t}$; this computation can be performed in time $O(n)$.
Sorting the list $(y_{i, t})_{i\in N}$ in non-decreasing order in time $O(n\log n)$, we consider the $t$-th smallest entry in the sorted array. This value, which we denote by $z_{p, t}$,
is the smallest amount by which we would need to increase the voters' budgets so that project $p$ could be paid for by $t$ of its supporters. We then pick the best value of $t$ for project $p$, i.e., we let $z_p=\min_t z_{p, t}$. We output $n\cdot \min_{p\in P}z_p$. The entire procedure runs in time $O(mn^2\log n)$.
\qed\end{proof}
\section{Additional Material for Section \ref{sec:stab+ejr}: Weak Stability}\label{sec:add}
We discuss a weaker notion of stability, 
which only considers projects not currently included in the outcome. 
The formal definition is as follows.
\begin{definition}\label{def:ws}
    A solution $(W, X)\in\calWX^=(E)$ is {\em weakly stable} if there is a project 
$p\notin W$ and a $t\in\mathbb N$ such that $|\{i\in N: p\in A(i), \kappa_i(X)\ge \frac{\cost(p)}{t}\}|\ge t$.
\end{definition}
The following example shows that weakly stable solutions may fail to provide EJR 
in the approval PB setting.
\begin{example}\label{ex:stability}
Consider election $E$ from Example~\ref{ex:run}. In $E$ every voter approves every project, and $\sum_{i=1}^5 \cost(p_i)=35=b$. Hence $N$ forms a $P$-cohesive group, 
and therefore an outcome satisfying EJR must contain all projects, as otherwise no voter would have approval utility $5$.
Now consider the outcome $(W,X)$ with 
\begin{align}
&W=\{p_1,p_2,p_3,p_4\},\label{outcomestart}\\ 
&x_{1,p}=x_{2,p}=3 \text{ for } p\in\{p_1, p_2, p_3\}, \ x_{1,p}=x_{2,p}=0
\text{ for }p\in \{p_4,p_5\}\\
&x_{3,p_4}=7\text{ and  }x_{3,p}=0 \text{ for }p\neq p_4.\label{outcomeend}
\end{align} 
Clearly, load distribution $X$ is priceable and equal-shares.
 In total, each of voters $1$ and $2$ spends 
 $$
 \sum_{j=1}^3 x_{1,p_j}=\sum_{j=1}^3 x_{2,p_j}=9< \frac{35}{3},
 $$
 and has remaining budget of $\frac{8}{3}$. This implies that their capacities are $\kappa_1(X)=\kappa_2(X)\leq \max\{\frac{8}{3},3\}=3$.
 Voter $3$ pays $x_{3,p_4}=7\leq \frac{35}{3}$ and her remaining budget is $\frac{14}{3}$; hence, $\kappa_3(X)\leq 7$.
 We claim that no $p\in P\setminus W$ satisfies $|\{i\in N: p\in A(i), \kappa_i(X)\ge \frac{\cost(p)}{t}\}|\ge t$ for a positive integer $t$.
 Indeed,  $P\setminus W = \{p_5\}$. Moreover, splitting $p_5$ equally among the three voters would cost $\frac{10}{3}$ for each of them, 
 and $\kappa_1(X)=\kappa_2(X) = 3 <\frac{10}{3}$. 
 On the other hand, $3$ cannot afford $p_5$ on her own, since $\kappa_3(X)< 10$.
 \end{example}

 A special case of participatory budgeting is multiwinner voting: 
 Here, the goal is to select a committee of size $k\in \mathbb{N}$ given voters' approval ballots. Thus, an instance of multiwinner voting with target committee size $k$
 can be viewed as an instance of participatory budgeting where each project has unit cost
 and the budget is $k$.
 It turns out that, unlike in the general case,
in the setting of multiwinner voting weak stability (Definition~\ref{def:ws}) implies EJR. 

Fix an instance of participatory budgeting 
$E = (N, P, (A(i))_{i\in N}, k, \cost)$, where $\cost(p)=1$ for each $p\in P$,
and an equal-shares solution $(W, X)$ for $E$. 
For each voter $i\in N$, let $z_i=\max\{x_{i, p}: p\in W\}$.
Note that, since $X$ is an equal-shares load distribution, if $z_i>0$, we have $z_i=\pi(p, X)$ for some $p\in W$; more precisely, $p$ has the highest per-voter price among the projects that
voter $i$ contributes to.
Let 
$$
\kappa_i(X)=\max\left\{z_i-\epsilon, \frac{k}{n}-X_i\right\}, 
$$
where $\epsilon=\frac{1}{n(n-1)}$
(recall that for general project costs we define 
$$
\epsilon=\min_{x,y\in K, x\neq y}|x-y|, \qquad\text{where }
K=\left\{\frac{\cost(p)}{i}: p\in P, i\in N\right\};
$$
the reader can verify that 
for unit costs this expression simplifies to $\frac{1}{n(n-1)}$). 

\begin{proposition}\label{prop:committee-ejr}
If for every $p\notin W$ and every $t\in [n]$ it holds that $|\{i\in N: p\in A(i), \kappa_i(X)\geq \frac{1}{t}\}|<t$ then $W$ provides EJR.
\end{proposition}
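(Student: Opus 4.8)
The plan is to mirror the structure of the proof of Theorem~\ref{thm:ejr}, but since we only assume \emph{weak} stability (instability witnesses must be projects outside $W$), we cannot invoke Lemma~\ref{lem:restrict} as a black box; instead I would argue directly and use the special structure of unit costs. Suppose for contradiction that $W$ fails EJR: there is an integer $\ell\le k$ and a group $V$ with $|V|\ge \ell\cdot\frac{n}{k}$ such that $\bigl|\bigcap_{i\in V}A(i)\bigr|\ge\ell$ yet $|A(i)\cap W|<\ell$, i.e.\ $|A(i)\cap W|\le\ell-1$, for every $i\in V$. Pick a set $T\subseteq\bigcap_{i\in V}A(i)$ with $|T|=\ell$. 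Since each $i\in V$ approves all $\ell$ projects of $T$ but at most $\ell-1$ projects of $W$, there is at least one project of $T$ not in $W$; more usefully, I want to locate a voter with large remaining capacity.

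First I would bound, for a carefully chosen voter $i\in V$, the total load $X_i$. Every voter in $V$ contributes (in the equal-shares load distribution $X$) only to projects in $W$; since $|A(i)\cap W|\le \ell-1$, voter $i$ pays for at most $\ell-1$ projects. Now I average: $\sum_{i\in V} X_i = \sum_{i\in V}\sum_{p\in W} x_{i,p} \le \sum_{p\in W} \cost(p) = |W| \le k$ (using unit costs and that only voters in $N$, in particular possibly outside $V$ too, pay — so this crude bound suffices). Hence some $i^\*\in V$ has $X_{i^\*}\le k/|V| \le k/(\ell n/k)\cdot\frac{1}{1}$; more to the point, $X_{i^\*}\le \frac{k}{|V|}\le\frac{k}{\ell}\cdot\frac{k}{n}\cdot\frac{\ell}{k}$ — let me instead phrase it as $X_{i^\*}\le \frac{k}{|V|}\le \frac{k^2}{\ell n}$? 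That is not obviously what I want; the cleaner route is: since $|V|\ge \ell n/k$, we get $X_{i^\*}\le \frac{k}{|V|}\le \frac{k^2}{\ell n}$, which is not directly comparable to $\frac{k}{n}$. The \textbf{main obstacle} is exactly this accounting step: I need to show that \emph{some} voter in $V$ either has leftover budget at least $\frac{1}{t}$ for a suitable $t$, or pays at least $\frac{1}{t}$ toward one of her $\le\ell-1$ projects (so that withdrawing gives capacity $\ge\frac{1}{t}-\epsilon\ge\frac{1}{t+1}$). The key quantitative fact is that $\frac1x - \epsilon \ge \frac{1}{x+1}$ whenever $x$ is a positive integer $\le n-1$ and $\epsilon=\frac{1}{n(n-1)}$, since $\frac1x-\frac1{x+1}=\frac1{x(x+1)}\ge\frac1{n(n-1)}$.

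The argument I would run: among voters in $V$, since the group approves $\ell$ common projects but each gets utility $\le\ell-1$ from $W$, a counting over the (at most $\ell-1$) projects each pays for shows there is a voter $i^\*\in V$ and a project $p\in T\setminus W$ such that the number of voters in $V$ who could afford $p$ is large. Concretely, I would show $\kappa_{i^\*}(X)\ge \frac{k}{n|V|}\cdot$(something) — no: the right target is $\kappa_i(X)\ge\frac1{|V|}$ for \emph{enough} voters $i\in V$ approving some fixed $p\in T\setminus W$. Take $p\in T\setminus W$ (exists since $|T\cap W|\le\ell-1<\ell=|T|$; here I use that \emph{every} $i\in V$ misses a project of $T$, and $T$ is common, so in fact $T\not\subseteq W$). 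For each $i\in V$: either $X_i\le \frac{k}{n}-\frac1{|V|}$, giving remaining budget $\ge\frac1{|V|}$ and hence $\kappa_i(X)\ge\frac1{|V|}$; or $X_i>\frac{k}{n}-\frac1{|V|}$, and since $i$ pays for $\le \ell-1\le k-1$ projects, averaging gives some project with $x_{i,q}> \frac{k/n-1/|V|}{\ell-1}\ge \frac{k/n - k/(\ell n)}{\ell-1} = \frac{k}{n}\cdot\frac{1}{\ell} \ge \frac{\ell}{n}\cdot\frac1\ell=\frac1n$? I want $\ge\frac1{|V|}+\epsilon$; since $\frac{k}{n\ell}\ge\frac1{|V|}$ (as $|V|\ge\ell n/k$), and $x_{i,q}$ is of the form $1/z$ for integer $z$, strict inequality $x_{i,q}>\frac1{|V|}$ forces $x_{i,q}\ge\frac1{|V|'}$ for some integer $|V|'\le|V|-1$ hence $\kappa_i(X)\ge x_{i,q}-\epsilon\ge\frac1{|V|-1}-\epsilon\ge\frac1{|V|}$. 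Either way $\kappa_i(X)\ge\frac1{|V|}$ for \emph{all} $i\in V$. Since $\cost(p)=1$ and all $|V|\ge\ell n/k\ge 1$ voters of $V$ approve $p\notin W$, taking $t=|V|$ (an integer, $\le n$) we get $|\{i\in N: p\in A(i),\ \kappa_i(X)\ge 1/t\}|\ge|V|=t$, contradicting weak stability. I would need to double-check the edge cases $\ell=1$ (then $X_i=0$ for the relevant $i$, handled by the first case) and the precise inequalities relating $|V|$, $k$, $\ell$, and the discreteness of equal-shares payments; these are routine but must be done carefully, and constitute the bulk of the write-up.
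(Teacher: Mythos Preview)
Your proposal is correct and, after the initial meandering, lands on essentially the same argument as the paper: for each $i\in V$, either the remaining budget $\frac{k}{n}-X_i$ is at least $\frac{1}{|V|}$, or pigeonhole over the at most $\ell-1$ projects $i$ pays for yields some $x_{i,q}>\frac{k/n-1/|V|}{\ell-1}\ge\frac{k}{n\ell}\ge\frac{1}{|V|}$, and discreteness of equal-shares payments together with $\epsilon=\frac{1}{n(n-1)}$ then forces $\kappa_i\ge\frac{1}{|V|}$; applying this with $t=|V|$ to any $p\in T\setminus W$ contradicts weak stability. The paper presents the same computation in contrapositive form (assume some $i\in V$ has $\kappa_i<\frac{1}{|V|}$ and derive a contradiction), and your noted edge case $\ell=1$ is indeed routine.
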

\begin{proof}
Suppose for the sake of contradiction that $W$ does not provide EJR. Then there exists a positive integer $\ell\le k$
and a group of voters $V$ with $|V|\ge \ell\cdot\frac{n}{k}$ such that $|\cap_{i\in V}A(i)|\ge \ell$, yet $|A(i)\cap W|<\ell$ for all $i\in V$. Note that there exists a project 
$p^*\in (\cap_{i\in V}A(i))\setminus W$.
Since $(W,X)$ is weakly stable, $\kappa_i < \frac{1}{|V|}$ for some $i\in V$.
Thus, voter $i$ spends more than $\frac{k}{n}-\frac{1}{|V|}$ on at most $\ell-1$ members of $W$. By the pigeonhole principle, this means that for some $p\in W$, $i$ pays $x_{i,p}>\frac{\frac{k}{n}-\frac{1}{|V|}}{\ell-1}=\frac{k}{n\ell}\geq \frac{1}{|V|}.$
Since $X$ is an equal-shares distribution, the quantity $\frac{1}{x_{i, p}}$ is a positive integer, and it follows
that $\frac{1}{x_{i, p}}\le |V|-1$ and hence $x_{i, p}\ge \frac{1}{|V|-1}$. 
Further, by construction, we have $\kappa_i\ge x_{i, p}-\frac{1}{n(n-1)}
\ge \frac{1}{|V|-1}-\frac{1}{n(n-1)}$.
As we have 
$\frac{1}{|V|-1}-\frac{1}{|V|}\ge \frac{1}{n(n-1)}$, 
this implies $\kappa_i\ge \frac{1}{|V|}$, a contradiction.
\qed\end{proof}


\end{document}